\newcommand{\citet}[1]{\citeauthor{#1}~\shortcite{#1}}
\newcommand{\citep}{\cite}
\newcommand{\propa}{PROP$\alpha$}
\newcommand{\propfa}{PROP$f(\alpha)$}
\newcommand{\efa}{EF$\alpha$}
\newcommand{\effa}{EF$f(\alpha)$}
\theoremstyle{definition}
\newtheorem{theorem}{Theorem}[section]
\newtheorem{definition}[theorem]{Definition}
\newtheorem{claim}[theorem]{Claim}
\newtheorem{lemma}[theorem]{Lemma}
\newtheorem{observation}[theorem]{Observation}
\newtheorem{corollary}[theorem]{Corollary}
\title{Allocating Mixed Goods with Customized Fairness and Indivisibility Ratio}
\author{
Bo Li$^1$\and
Zihao Li$^2$\and
Shengxin Liu$^{3}$\And
Zekai Wu$^3$
\affiliations
$^1$The Hong Kong Polytechnic University\\
$^2$Nanyang Technological University\\
$^3$Harbin Institute of Technology, Shenzhen\\
\emails
comp-bo.li@polyu.edu.hk,
zihao004@e.ntu.edu.sg,
\{sxliu@, 200110607@stu.\}hit.edu.cn
}
\begin{document}

\maketitle

\begin{abstract}
    We consider the problem of fairly allocating a combination of divisible and indivisible goods.
    While fairness criteria like envy-freeness (EF) and proportionality (PROP) can always be achieved for divisible goods, only their relaxed versions, such as the ``up to one'' relaxations EF1 and PROP1, can be satisfied when the goods are indivisible. 
    The ``up to one'' relaxations require the fairness conditions to be satisfied provided that one good can be completely eliminated or added in the comparison. 
    In this work, we bridge the gap between the two extremes and propose ``up to a fraction'' relaxations for the allocation of mixed divisible and indivisible goods.
    The fraction is determined based on the proportion of indivisible goods, which we call the indivisibility ratio.
    The new concepts also introduce asymmetric conditions that are customized for individuals with varying indivisibility ratios. 
    We provide both upper and lower bounds on the fractions of the modified item in order to satisfy the fairness criterion. 
    Our results are tight up to a constant for EF and asymptotically tight for PROP. 
\end{abstract}

\section{Introduction}\label{sec:intro}
Fair division of a mixture of divisible and indivisible goods has been well motivated since \citet{BeiLiLi20}. This scenario is exemplified in the context of dividing inheritances, where the assets include both money and land (divisible goods) as well as houses and cars (indivisible goods).
In contrast to the division of purely divisible goods, one of the key challenges lies in defining and characterizing the notions of fairness that are both ideal and practical.
This aspect continues to be a subject of ongoing discussion and exploration in the literature \citep{KawaseNiSu23,LiuLuSu23}, and 
our work contributes to this ongoing debate.

When the goods are all divisible, {\em envy-freeness} (EF) \citep{Foley67,varian1973equity} and {\em proportionality} (PROP) \citep{Steinhaus49} are the prominent fairness notions. 
Informally, an allocation is EF if every agent does not envy any other agent's bundle, and is PROP if every agent's utility over her bundle is no less than $1/n$, where $n$ is the number of agents throughout the paper and each agent's total utility is normalized to 1.  
When the goods are all indivisible, due to the fact that EF and PROP allocations barely exist, the ``up to one'' relaxation is one of the most widely accepted notions, such as {\em envy-freeness up to one good} (EF1) which ensures that the envy between two agents can be resolved by removing a single good \citep{LiptonMaMo04,Budish11}, and {\em proportionality up to one good} (PROP1) which requires every agent's utility to be no less than $1/n$ after grabbing an additional good from some other agent's bundle \citep{CFS17}. EF1 and PROP1 have some nice properties, such as guaranteed existence, simple computation, and being compatible with Pareto optimality (PO) \citep{CaragiannisKuMo19}. 

When the goods are mixed, EF1 and PROP1 can be directly applied and guaranteed to be satisfiable, by treating the divisible goods as hypothetical infinitesimally indivisible units.
However, these ``up to one'' relaxations are rather weak fairness criteria, as the presence of divisible goods can help alleviate the burden of unfairness. 
In light of this, \citet{BeiLiLi20} introduced {\em envy-freeness for mixed goods} (EFM), whose existence is also guaranteed: for any two agents $i$ and $j$, agent $i$ does not EFM-envy agent $j$ if (1) agent $i$ does not envy agent $j$, or (2) agent $j$'s bundle only contains indivisible goods and agent $i$ does not EF1-envy agent $j$.
EFM serves as a stronger notion than EF1 as condition (2) forces $j$'s bundle to only contain indivisible goods if agents $i$ envies $j$.

Apart from EFM, a more straightforward approach to enhance EF1 and quantify the help of divisible goods in achieving fairness is to directly strengthen the ``up to one'' relaxation to the ``up to a fraction'', and the specific fraction depends on the portion of indivisible goods in relation to all goods. 
Intuitively, an agent may desire fairer allocations when her portion of divisible goods is more valuable.
One possible way to quantify the portion of (in)divisible goods for each agent $i$ is through her \emph{indivisibility ratio} $\alpha_i$, where  $\alpha_i$ represents the portion of utility derived from indivisible goods.
Then, an allocation is \emph{envy-free up to $\alpha$-fraction of one good} (EF$\alpha$) to agent $i$ if any envy she has towards another agent $j$ can be resolved by obtaining an $\alpha_i$ fraction of some indivisible item from agent $j$'s bundle.
Similarly, an allocation is \emph{proportional up to $\alpha$-fraction of one good} (PROP$\alpha$) to agent $i$ if her utility remains at least $1/n$ after acquiring an $\alpha_i$ fraction of some indivisible item from another agent's bundle.
It is important to note that the ``up to $\alpha$'' relaxation allows for varying indivisibility ratios among the agents, thereby tailoring the evaluation of fairness based on each agent's specific perspective. 
In this work, we focus on EF$\alpha$ and PROP$\alpha$.

\paragraph{Example}
To illustrate the difference between EFM and EF$\alpha$, we consider the following example, where two indivisible goods $M=\{o_1, o_2\}$ and one cake $C$ are allocated to three identical agents.
The utility function $u(\cdot)$ is shown in Table \ref{table:example:intro}.
Allocation $\mathcal{A} = (A_1, A_2,A_3)$ with $A_1=A_2=\frac12C$ and $A_3=\{o_1,o_2\}$ is EFM, but it is not EF$\alpha$ since removing $0.5$ fraction from item $o_1$, the remaining utility of $A_3$ is $0.25\times0.5+0.25=0.375$ which is still greater than $0.25$.

From this example, we observe that when the indivisibility ratio is small, EF$\alpha$ can ensure a fairer or more balanced allocation which can be closer to EF. In contrast, EFM may return an allocation that appears somewhat unfair due to its adherence to the EF1 criteria for bundles comprising solely indivisible goods. Furthermore, unlike EFM, when agents are non-identical, EF$\alpha$ guarantees customized fairness based on various personalized indivisibility ratios.

\begin{table}[ht]
    \centering
    \begin{tabular}{c|cc|c|c}
        \toprule
        &  $o_1$ & $o_2$ & $C$ & $\alpha$ \\
        \midrule
        $u(\cdot)$ &  $0.25$ & $0.25$ & $0.5$ & $0.5$\\
        \bottomrule
    \end{tabular}
    \caption{An Example on EFM v.s. EF$\alpha$}
    \label{table:example:intro}
\end{table}

\subsection{Main Results}
In this paper, we propose to study the ``up to a fraction'' relaxation of EF and PROP, when a mixture of divisible and indivisible goods are allocated. We show that an EF$\alpha$ allocation may not exist and a PROP$\alpha$ allocation always exists. Thus we would like to understand to what extent EF$\alpha$ needs to be relaxed and PROP$\alpha$ can be strengthened, namely EF$f(\alpha)$ and PROP$f(\alpha)$, so that a fair ``up to a fraction'' allocation exists.

In Section~\ref{sec:efalpha}, we study the ``up to a fraction'' relaxation of EF, i.e., EF$\alpha$ and EF$f(\alpha)$.
We first prove that $f(\alpha)=\Theta(n)\alpha$ is necessary and sufficient to satisfy EF by removing $f(\alpha)$ fraction of a good. 
We find that any EFM allocation is EF$n\alpha$, and thus an EF$n\alpha$ allocation always exists (by \citep{BeiLiLi20}). 
The guarantee of EFM cannot be improved even when agents have identical valuations.
On the other hand, we prove that at least $\frac{n^2}{4(n-1)}\alpha$ fraction of the good has to be removed in order to satisfy EF, and thus our results are tight up to a constant.
Besides, when agents have identical valuations, we show that a simple greedy algorithm ensures an EF$\frac{n^2}{4(n-1)}\alpha$ allocation, which exactly characterizes the extent to which EF$f(\alpha)$ can be guaranteed in this restricted case.

We then focus on the ``up to a fraction'' relaxation of PROP, namely, PROP$\alpha$ and PROP$f(\alpha)$, in Section~\ref{sec:propalpha}.
In contrast to EF$\alpha$, EFM implies PROP$\alpha$ whose existence is thus guaranteed. 
Additionally, we design a simple polynomial-time algorithm to compute such an allocation.
On the negative side, we find that a PROP$(\frac{n-1}{n}-\varepsilon)\alpha$ allocation does not always exist for any $\varepsilon > 0$ so that our bound is asymptotically the best possible. 
On top of the existence result, in Section~\ref{sec:propa+po}, we prove that PROP$\alpha$ and an economic efficiency criterion of Pareto optimality (PO) can be ensured simultaneously. 
Throughout our analysis, we draw upon the ideas in \citep{CaragiannisKuMo19} and show that any maximum Nash welfare allocation satisfies PROP$\alpha$ and PO in the context of mixed goods.
We stress that our analysis significantly differs from and is much more intricate than the analysis in \citep{CaragiannisKuMo19}. 
This is because we need to evaluate the allocation as a whole, compelled by the PROP requirement and the definition of the indivisibility ratio, rather than relying on a simple pairwise exchange analysis.
To tackle this issue, we utilize a monotone property to derive a decent tight condition on one agent's utility. 
We also point out that previous results on the compatibility of PROP and PO were directly deduced from the compatibility of EF and PO. However, as an EF$\alpha$ allocation may not exist, we directly study PROP$\alpha$ in this paper.

The relations between the ``up to a fraction'' fairness notions and other well-known notions in the mixed goods setting are discussed in Section~\ref{sec:extension}.

\subsection{Related Work}
The study of fair allocation is extensive (see, e.g.,~\citet{brams1995envy,RobertsonWe98,Moulin19,Suksompong21,AmanatidisAzBi23} for a survey).
To capture fairness, various notions have been proposed for divisible and indivisible goods, including EF \citep{Foley67,varian1973equity} and PROP \citep{Steinhaus49} for divisible goods; EF1 \citep{LiptonMaMo04,Budish11} and PROP1 \citep{CFS17} for indivisible goods. 
There are also some notable fairness notions, e.g., {\em envy-freeness up to any good} (EFX) \citep{LiptonMaMo04,gourves2014near} and {\em maximin share} (MMS) \citep{Budish11}, etc. 

Recently, a stream of literature has focused on the fair allocation problem with a mixture of divisible and indivisible goods (mixed goods)~\citep{BeiLiLi20,BeiLiLu20,BhaskarSrVa21,NishimuraSu23,BeiLiLu23}. In particular, \citet{BeiLiLi20} initiated the fair division problem with mixed goods and proposed the fairness notion {\em envy-freeness for mixed goods} (EFM). Further, \citet{BeiLiLu20} and \citet{KawaseNiSu23} considered the fairness notions of MMS and {\em envy-freeness up to one good for mixed goods} (EF1M) in the mixed goods setting, respectively.
Note that, the ratio of approximate MMS allocation obtained in \citep{BeiLiLu20} is a monotonically increasing function determined by how agents value the divisible goods relative to their MMS values. On the other hand, our proposed indivisible ratio is determined by how an agent values the divisible goods relative to her value of all goods.
Furthermore, \citet{LiLiLu23} and \citet{LiLiLu24} examined EFM in conjunction with the issues of truthfulness and price of fairness, respectively. 
See a recent survey on the mixed fair division for more details~\citep{LiuLuSu23}.

In addition, several studies considered the interplay between fairness and efficiency for fair allocation~\citep{BKV18,BK19,FreemanSVX19,garg2021fair,aziz2020prop1,conf/ijcai/00010G21}.
Specifically, EF, EF1, and EF1M are compatible with Pareto optimality (PO) (i.e., a criterion of efficiency) via the maximum Nash welfare allocation in the divisible, indivisible, and mixed goods settings, respectively~\citep{SS19,CaragiannisKuMo19}.
It is worth noting that EFM is incompatible with PO while whether a weak version of EFM can be combined with PO is an open question in the mixed goods setting~\citep{BeiLiLi20}.

\section{Preliminaries} \label{sec:preliminaries}
Let $[k]$ denote the set $\{1,2,\ldots,k\}$ for any positive integer $k$.
We consider the mixed goods setting. 
Denote by $N = \{1, 2, \dots, n\}$ the set of $n$ agents, $M = \{o_1, o_2, \dots, o_m\}$ the set of $m$ indivisible goods, 
and $C = [0, 1]$ the set of heterogeneous divisible goods or a single \emph{cake}.\footnote{When there are more than one heterogeneous divisible goods, say $C = \{c_1, c_2, \ldots, c_\ell\}$, each cake can be represented by an interval $[\frac{j-1}{\ell}, \frac{j}{\ell})$, and thus the entire set of divisible goods can be regarded as a single cake $C = [0, 1)$. Later we assume agents' utility functions over $C$ are non-atomic, and hence the cake $[0,1)$ is equivalent to $[0, 1]$.}
Define $A = M \cup C$ to be the set of mixed goods.
An \emph{allocation} of the mixed goods is defined as $\mathcal{A} = (A_1, A_2, \dots, A_n)$ where $A_i = M_i \cup C_i$ is the \emph{bundle} allocated to agent $i$ subject to:
1) $C_i$ is a union of countably many intervals;
2) for any $i, j \in [n]$, $M_i \cap M_j = \emptyset$ and $C_i \cap C_j = \emptyset$;
3) $\bigcup_{i\in[n]} A_i = A$.
Each agent $i$ has a non-negative utility function $u_i(\cdot)$. Assume that each $u_i$ is additive over $A$ and integrable over $C$, that is, for any $M'\subseteq M$ and $C'\subseteq C$, $u_i(M'\cup C')=\sum_{o\in M'}u_i(o)+\int_{C'} u_i(x) dx$.
We also assume without loss of generality that agents' utilities are normalized to 1, i.e., $u_i(M \cup C) = 1$ for each $i \in N$.

We first show the classic fairness notions in the literature.

\begin{definition}[EF \& PROP]
    An allocation $\mathcal{A}$ is called
 \begin{itemize}
    \item \emph{envy-freeness (EF)} if for any agents $i, j \in N$, $u_i(A_i) \geq u_i(A_j)$;
    \item \emph{proportionality (PROP)} if for any agent $i \in N$, $u_i(A_i) \geq 1 / n$.
 \end{itemize}
\end{definition}

As mentioned before, for indivisible goods, relaxations of EF/PROP are commonly studied in the previous works.

\begin{definition}[EF1 \& PROP1]
    An allocation $\mathcal{A}$ is called
    \begin{itemize}
        \item \emph{envy-freeness up to one good (EF1)} if for any agents $i, j \in N$, there exists an indivisible good $o \in M_j$ such that $u_i(A_i) \geq u_i(A_j \setminus \{o\})$;
        \item \emph{proportionality up to one good (PROP1)} if for any agent $i \in N$, there exists an indivisible good $o \in M \backslash M_i$ such that $u_i(A_i) + u_i(o) \geq 1/n$.
    \end{itemize}
\end{definition}

However, as illustrated in the introduction, EF1 and PROP1 are rather weak in the mixed goods setting. In this paper, we introduce new ``up to a fraction'' fairness notions with the help of \emph{indivisibility ratio}.

\begin{definition}[Indivisibility Ratio]\label{def:alpha}
For each agent $i$, the \emph{indivisibility ratio} $\alpha_i$ is defined as $\alpha_i := \frac{u_i(M)}{u_i(M) + u_i(C)}$.
\end{definition}

For each agent $i$, $\alpha_i$ is the ratio between the utility for all \emph{indivisible} goods and the utility for all goods. 
We point out that each agent has a {\em personalized} indivisibility ratio, allowing us to define the fairness with respect to each agent's perspective. 
Specifically, we introduce the following new fairness notions.

\begin{definition}[EF$\alpha$ \& PROP$\alpha$]
    An allocation $\mathcal{A}$ is called
    \begin{itemize}
        \item \emph{envy-freeness up to $\alpha$-fraction of one good (EF$\alpha$)} if for any agents $i, j \in N$, there exists an indivisible good $o \in M_j$ such that $u_i(A_i) \geq u_i(A_{j}) - \alpha_i \cdot u_i(o)$.
        \item \emph{proportionality up to $\alpha$-fraction of one good (PROP$\alpha$)} if for any agent $i \in N$, there exists an indivisible good $o \in M \backslash M_i$ such that $u_i(A_i) + \alpha_i \cdot u_i(o) \geq 1/n$.
    \end{itemize}
\end{definition}

It is easy to observe that when an agent has a higher utility for the cake, her indivisible ratio becomes smaller. This, in turn, implies that she is more likely to receive an allocation closer to EF/PROP under the EF$\alpha$/PROP$\alpha$ criteria.
One can easily check that when good is only the cake, EF$\alpha$ (resp., PROP$\alpha$) reduces to EF (resp., PROP); when goods are all indivisible, EF$\alpha$ (resp., PROP$\alpha$) reduces to EF1 (resp., PROP1). We can also observe that EF$\alpha$ implies PROP$\alpha$.

As we will show later, an EF$\alpha$ allocation may not exist and a PROP$\alpha$ allocation always exists. 
For a better understanding of what EF$\alpha$
needs to be relaxed and PROP$\alpha$ can be strengthened, we next introduce the generalizations of EF$\alpha$ and PROP$\alpha$.

\begin{definition}[EF$f(\alpha)$ \& PROP$f(\alpha)$]
    An allocation $\mathcal{A}$ is 
    \begin{itemize}
        \item \emph{envy-freeness up to one $f(\alpha)$-fraction of good (EF$f(\alpha)$)} if for any agents $i, j \in N$, there exists an indivisible good $o \in M_j$ such that $u_i(A_i) \geq u_i(A_{j}) - f(\alpha_i) \cdot u_i(o)$.
        \item \emph{proportionality up to one $f(\alpha)$-fraction of good (PROP$f(\alpha)$)} if for any agent $i \in N$, there exists an indivisible good $o \in M \backslash M_i$ such that $u_i(A_i) + f(\alpha_i) \cdot u_i(o) \geq 1/n$.
    \end{itemize}
\end{definition}

When $f(\alpha) = \alpha$, the above notions degenerate to \efa~and \propa. In this paper, we focus on the linear function form $f(\alpha) = g(n) \cdot \alpha$, where $g(n)$ is a function of the number of agents. One can obtain stronger (resp., weaker) fairness requirements by making $g(n)$ smaller (resp., larger). 

We also consider the efficiency of the allocations.

\begin{definition}[PO]
    An allocation $\mathcal{A}$ is said to satisfy \emph{Pareto optimality (PO)} if there is no allocation $\mathcal{A'}$ that Pareto-dominates $\mathcal{A}$, i.e., $u_i(A'_i) \geq u_i(A_i)$ for all agents $i \in N$ and $u_i(A'_i) > u_i(A_i)$ for some agents $i \in N$.
\end{definition}

\begin{definition}[MNW]
    An allocation $\mathcal{A}$ is a maximum Nash welfare (MNW) allocation if the number of agents with positive utility is maximized, and subject to that, the product of the positive utilities ($\prod_{i\in [n]:u_i(A_i) > 0} u_i(A_i)$) is maximized.
\end{definition}

Finally, we utilize the Robertson-Webb (RW) query model \citep{RobertsonWe98} for accessing agents' utility functions over the cake. The RW model allows algorithms to query the agents with the following two methods: 1) an {\em evaluation} query on $[x,y]$ for agent $i$ returns $u_i([x,y])$, and 2) a {\em cut} query of $\beta$ for agent $i$ from $x$ returns the leftmost point $y$ such that $u_i([x,y])=\beta$, or reports no such $y$ exists.
In this paper, we assume each RW query using $O(1)$ time.

\section{Envy-freeness up to a Fractional Good}\label{sec:efalpha}
In this section, we focus on envy-freeness up to a fractional good, i.e., EF$\alpha$ and EF$f(\alpha)$.
We first present that for two agents, an EF$\alpha$ + PO allocation always exists and an EF$\alpha$ allocation can be found in polynomial time.
Then, we proceed to consider the case with $n \geq 3$ agents and show that there does not exist EF$(\frac{n^2}{4(n-1)}-\varepsilon)\alpha$ allocations for any $\varepsilon > 0$. We then explore the best fairness guarantee under EF$f(\alpha)$. In particular, we find that an EF$n\alpha$ allocation always exists which is tight up to a constant factor.
When agents have identical utility, we further show that $f(\alpha)=\frac{n^2}{4(n-1)}\alpha$ is the exact fraction we can guarantee for EF$f(\alpha)$.

\subsection{Two Agents} \label{subsec:efaplha2agents}
In this part, we first make use of the polynomial-time algorithm for finding an EFM allocation with two agents in~\citep{BeiLiLi20} to provide the existence of EF$\alpha$ allocations for two agents.

\begin{theorem}\label{thm:efalpha2agentspoly}
   When $n=2$, an EF$\alpha$ allocation exists and can be found in polynomial time.
\end{theorem}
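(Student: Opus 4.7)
The plan is to invoke the polynomial-time algorithm of \citet{BeiLiLi20} that computes an EFM allocation for $n=2$, and then argue that its output is already EF$\alpha$. The key property I will rely on is the divide-and-choose structure of that algorithm: agent $1$ constructs a self-EFM partition of $M\cup C$ and agent $2$ picks her preferred bundle.

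First, I recall how agent $1$ builds the partition $(B_1,B_2)$. She forms an EF1 partition $(M_1,M_2)$ of the indivisible goods under $u_1$, say with $u_1(M_1)\geq u_1(M_2)$, and then transfers cake to the $M_2$-side until either the two bundles are balanced in $u_1$ or the cake is exhausted. Call the result $(B_1,B_2)$, with $B_1$ containing $M_1$ and any cake left over and $B_2$ containing $M_2$ together with the cake transferred to it.

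Next I verify EF$\alpha$ for both agents. Agent $2$ selects her preferred bundle, so $u_2(A_2)\geq u_2(A_1)$ and her EF$\alpha$ condition is trivial. For agent $1$, if the cake suffices to balance then $u_1(B_1)=u_1(B_2)$ and the claim is immediate. Otherwise the cake is depleted, leaving $B_1=M_1$ entirely indivisible and $B_2=M_2\cup C$ with $u_1(B_1)>u_1(B_2)$. Set $o^{*}=\arg\max_{o\in M_1}u_1(o)$; the EF1 property of $(M_1,M_2)$ gives $u_1(M_1)-u_1(M_2)\leq u_1(o^{*})$, and combined with $u_1(C)=1-\alpha_1$,
\[
u_1(B_1)-u_1(B_2)=u_1(M_1)-u_1(M_2)-u_1(C)\leq u_1(o^{*})-(1-\alpha_1).
\]
Normalization $u_1(o^{*})\leq 1$ implies $u_1(o^{*})(1-\alpha_1)\leq 1-\alpha_1$, which rearranges to $u_1(o^{*})-(1-\alpha_1)\leq \alpha_1\,u_1(o^{*})$. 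Hence the envy toward $B_1$ can be wiped out by removing an $\alpha_1$-fraction of $o^{*}\in M_1$, certifying EF$\alpha$ for agent $1$ no matter which of $B_1,B_2$ she ends up with.

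The main point requiring care is to confirm that the $n=2$ EFM procedure of \citet{BeiLiLi20} does yield a partition with the balance-or-deplete form used above; once this structural fact is in hand, the polynomial-time guarantee is inherited directly from the EFM algorithm's runtime.
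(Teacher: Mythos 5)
Your proposal is correct and follows essentially the same route as the paper: both invoke the two-agent EFM divide-and-choose algorithm of Bei et al., reduce to the single nontrivial case where the cake cannot balance agent~1's two bundles, and then combine the EF1 gap bound $u_1(M_1)-u_1(M_2)\leq u_1(o^*)$ with the normalization $u_1(o^*)\leq 1$ (so that $1-\alpha_1\geq (1-\alpha_1)u_1(o^*)$) to absorb the residual envy into an $\alpha_1$-fraction of $o^*$. Your version just presents the same inequality chain in subtracted rather than additive form.
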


\begin{proof}
The polynomial-time algorithm for finding an EFM allocation with two agents is also capable of finding an EF$\alpha$ allocation with two agents (\cite{BeiLiLi20}): ``We begin with an EF1 allocation $(M_1, M_2)$ of all indivisible goods. Assume without loss of generality that $u_1(M_1) \geq u_1(M_2)$. Next agent 1 adds the cake into $M_1$ and $M_2$ so that the two bundles are as close to each other as possible. Note that if $u_1(M_1) > u_1(M_2 \cup C)$, agent 1 would add all cake to $M_2$. If $u_1(M_1) \leq u_1(M_2 \cup C)$, agent 1 has a way to make the two bundles equal.
We then give agent 2 her preferred bundle and leave to agent 1 the remaining bundle.''
It is easy to see that we only need to analyze the case when $u_1(M_1) > u_1(M_2 \cup C)$ holds and agent 1 gets $M_2\cup C$. Since there exists some good $g$ in $M_1$ such that $u_1(M_2) \geq u_1(M_1 \setminus \{g\})$, we have
\begin{align*}
u_1(M_2 \cup C) & = u_1(M_2) + (1-\alpha_1) \\
&\geq u_1(M_1) - u_1(\{g\}) + (1-\alpha_1)u_1(\{g\})\\
&= u_1(M_1)  - \alpha_1 \cdot u_1(\{g\}), 
\end{align*}
which completes the proof. 
\end{proof}

We now proceed to show the compatibility of EF$\alpha$ and PO for two agents.
In particular, we first consider an allocation obtained via a variant of the cut-and-choose procedure: the first agent partitions the goods into two bundles $A_x$ and $A_y$ as equal as possible (assume $\max\{u_2(A_x),u_2(A_y)\}$ is maximized if multiple such partitions exist), and the second agent chooses first. Such an allocation can be utilized to return an EF$\alpha$ and PO allocation through Pareto improvements, which leads to the following theorem.

\begin{theorem}\label{thm:efalphaPO2agents}
    When $n=2$, an EF$\alpha$ and PO allocation always exists .
\end{theorem}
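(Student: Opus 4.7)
I would follow the outline just before the statement. First, build the allocation $\mathcal{A}^0 = (A_1^0, A_2^0)$ via the variant cut-and-choose: agent 1 partitions the mixed goods into $(A_x, A_y)$ minimizing $|u_1(A_x) - u_1(A_y)|$, with ties broken by maximizing $\max\{u_2(A_x), u_2(A_y)\}$; agent 2 picks her preferred bundle and agent 1 receives the other. By an adaptation of the argument used for Theorem 3.1 (the minimality of the gap plays the role that the EF1 property for indivisibles plays there), $\mathcal{A}^0$ is EF$\alpha$. Moreover, since agent 2 picks her preferred bundle, $u_2(A_2^0) \geq 1/2$.

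Second, I would take any PO allocation $\mathcal{A}^* = (A_1^*, A_2^*)$ that weakly Pareto-dominates $\mathcal{A}^0$; existence follows from compactness of the set of dominating allocations together with continuity of utilities. Using $u_i(A_i) + u_i(A_j) = 1$ for $i \neq j$, the improvement $u_i(A_i^*) \geq u_i(A_i^0)$ forces $u_i(A_j^*) \leq u_i(A_j^0)$, so envy can only shrink from $\mathcal{A}^0$ to $\mathcal{A}^*$. In particular $u_2(A_1^*) \leq 1/2 \leq u_2(A_2^*)$, so EF$\alpha$ holds trivially for agent 2. For agent 1, let $o_0 \in M_2^0$ be the EF$\alpha$ witness in $\mathcal{A}^0$, so $u_1(A_1^0) + \alpha_1 u_1(o_0) \geq u_1(A_2^0)$. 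If $o_0$ still belongs to $M_2^*$, then the chain $u_1(A_1^*) + \alpha_1 u_1(o_0) \geq u_1(A_1^0) + \alpha_1 u_1(o_0) \geq u_1(A_2^0) \geq u_1(A_2^*)$ closes the argument with the same witness.

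The main obstacle is the remaining case in which $o_0$ has migrated from $M_2^0$ into $M_1^*$ during the Pareto improvement. Here agent 1 has directly absorbed $u_1(o_0)$ into her bundle, and since $u_1(A_2^0) - u_1(A_1^0) \leq \alpha_1 u_1(o_0) \leq u_1(o_0)$, this absorption more than cancels the original envy; however, the Pareto-improving exchange might have handed cake of positive $u_1$-value back to agent 2, partially restoring envy. I would control this by quantitatively tracking the $u_1$-value of items flowing in each direction across the exchange, concluding either that agent 1 no longer envies (so EF$\alpha$ is immediate, with the sub-case $M_2^* = \emptyset$ handled by noting that agent 1 then owns all indivisibles and $u_1(A_1^*) \geq u_1(M) = \alpha_1$) or that some other $o' \in M_2^*$ has $u_1(o')$ large enough to serve as a fresh witness. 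To make this go through cleanly I expect to need to refine the choice of $\mathcal{A}^*$ within the set of PO dominators---for example, taking the one that maximizes $u_1(A_1^*)$ so that extremality rules out adversarial migrations of indivisibles---and this witness-tracking step is where I anticipate the bulk of the technical work.
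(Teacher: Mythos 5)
Your construction of the initial allocation is the same as the paper's, but the argument as proposed has two genuine gaps. First, the EF$\alpha$ property of $\mathcal{A}^0$ is asserted ``by adaptation of Theorem 3.1'' rather than proved, and it does not follow from that argument (which starts from an EF1 split of the indivisibles and adds cake). What is actually needed is an exchange argument specific to this construction: if $x < y - \alpha_1 u_1(g)$ for every $g \in A_y$, one first observes that $A_y$ contains no positively-valued cake (otherwise moving cake to $A_x$ equalizes), then uses $u_1(C) \ge (1-\alpha_1)u_1(M) \ge (1-\alpha_1)u_1(g)$ to swap an indivisible $g \in A_y$ against a piece of cake of value $(1-\alpha_1)u_1(g)$, producing a strictly more equal partition and contradicting minimality. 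This step has to be written out.

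Second, and more seriously, the witness-tracking in your third paragraph---controlling what happens when $o_0$ migrates into $M_1^*$ during the Pareto improvement, possibly refining $\mathcal{A}^*$ to maximize $u_1(A_1^*)$---is exactly the hard part of your setup, and you explicitly leave it as anticipated work rather than completing it; I do not see how the extremality of $u_1(A_1^*)$ rules out the cake-for-good exchanges you worry about. The paper avoids this entirely via a dichotomy. If agent 2 picks $A_x$, or picks $A_y$ with $x=y$, the allocation is PROP; for $n=2$ PROP is preserved by every Pareto improvement and implies EF, so any PO dominator works with no witness to track. In the remaining case (agent 2 picks $A_y$ and $x<y$), every Pareto dominator of $(A_x,A_y)$ must itself be PROP, because a non-PROP dominator would satisfy $\tfrac12 > u_1(A_x') \ge u_1(A_x)$ and thus contradict either the minimality of $y-x$ or the tie-breaking rule maximizing $u_2(A_y)$; hence either $(A_x,A_y)$ is already PO (and EF$\alpha$ by the exchange argument above), or a PROP---hence EF---dominator exists and can be pushed to a PO one. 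In no branch does one ever need to propagate an EF$\alpha$ witness through a Pareto improvement. I recommend restructuring your proof around this dichotomy; as written, the proposal is incomplete.
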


\begin{proof}
Let agent 1 partition the goods into two bundles $A_x$ and $A_y$ such that her values for the bundles are as equal as possible. Formally, let $u_1(A_x) = x$ and $u_1(A_y) = y$, and assume that $x \leq y$. We consider the partition that minimize $y - x$. If there are multiple such partitions, we only focus on the  partition for which $u_2(A_y)$ is maximized. Let agent 2 choose the bundle that she prefers, giving the other bundle to agent 1. We will show that the partition can be utilized to return an EF$\alpha$ and PO allocation. 

If (1) agent 2 chooses $A_x$ or (2) agent 2 chooses $A_y$ and $x = y$, then the final allocation is EF and hence PROP. Any Pareto improvement maintains PROP which proves the existence of PROP (thus EF and EF$\alpha$) and PO. 

The remaining case is that agent 2 chooses $A_y$ and $x < y$.
We consider the Pareto optimality of the current allocation $(A_x, A_y)$. Suppose to the contrary that there is another allocation $(A_x', A_y')$ that is Pareto-dominating $(A_x, A_y)$. 
If this allocation $(A_x', A_y')$ is PROP (and thus EF and EF$\alpha$), we can utilize it to obtain an EF$\alpha$ and PO allocation from the same analysis as above.
Otherwise, since $u_2(A_y')\ge u_2(A_y)\ge\frac12$, we have: either (1) $\frac12>u_1(A_x')>u_1(A_x)$ or (2) $\frac12>u_1(A_x')=u_1(A_x)$ and $u_2(A_y')>u_2(A_y)$.
Both cases lead to a contradiction with the construction of the partition $(A_x,A_y)$.
It suffices to show $(A_x,A_y)$ is an \efa~allocation.
As agent 2 is clearly EF, we only need to show that agent 1 in this case is EF$\alpha$. 

Assume to the contrary that we have $x < y - \alpha_1 \cdot u_1(g)$ for all $g \in A_y$. Thus we have $x - y < -\alpha_1 \cdot u_1(g)$ and $y - x > \alpha_1 \cdot u_1(g)$. 
Observing that $A_y$ does not contain any pieces of cake with positive value, otherwise moving some pieces of cake to $A_x$ would create a more equal partition, which leads to a contradiction.
Moreover, by definition of $\alpha_1$, we have $u_1(C) = \frac{1-\alpha_1}{\alpha_1} u_1(M) \geq (1-\alpha_1) u_1(M)$. This implies that we can move some indivisible good $g \in A_y$ to $A_x$ and move some pieces of cake with value $(1 - \alpha_1) \cdot u_1(g)$ to $A_y$. Denote the resulting allocations as $A_x'$ and $A_y'$. Now, agent 1 has value $x' = x + u_1(g) - (1 - \alpha_1) \cdot u_1(g) = x + \alpha_1 \cdot u_1(g)$ for $A_x'$ and $y' = y - u_1(g) + (1 - \alpha_1) \cdot u_1(g) = y - \alpha_1 \cdot u_1(g)$ for $A_y'$. If we still have $x' \leq y'$, it is clear that $y' - x' < y - x$; otherwise, we have $x' - y' = x - y + 2\alpha_1 \cdot u_1(g) < \alpha_1 \cdot u_1(g) < y - x$. It means that in either case we are able to create a more equal partition, a contradiction. This completes the proof.
\end{proof}

\subsection{General Number of Agents} \label{subsec:efalpha3agents}
We move on to consider the case of a general number of agents $n \geq 3$ in this part.
When the resources to be allocated contain only divisible or indivisible goods, EF$\alpha$ allocations always exist.
However, when the goods are mixed, we show that EF$\alpha$ allocations fail to exist even when there is only one homogeneous cake\footnote{We call a cake \emph{homogeneous} if the utility over a subset of cake $C$ depends only on the length of this subset, i.e., for each $i\in[n]$ and any $C'\subseteq C$, $u_i(C')=\frac{|C'|}{|C|} \cdot u_i(C)$, where $|C'|$ and $|C|$ represents the length of $C'$ and $C$, respectively. 
} 
and one indivisible good.

\begin{theorem}\label{thm:efalpha3agents}
    For $n \geq 3$ agents, an EF$\alpha$ allocation does not always exist. Specifically, for any $\varepsilon > 0$, an EF$(\frac{n^2}{4(n-1)}-\varepsilon)\alpha$ allocation does not always exist.
\end{theorem}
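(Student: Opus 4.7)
The plan is to exhibit a single hard instance that simultaneously refutes both parts of the theorem, since the stronger $\frac{n^2}{4(n-1)}$ statement implies non-existence of EF$\alpha$ whenever $\frac{n^2}{4(n-1)}-\varepsilon \leq 1$. The instance has one indivisible good $o$, one homogeneous cake $C$, and $n$ identical agents with $u_i(o) = \alpha$ and $u_i(C) = 1-\alpha$, so $\alpha_i = \alpha$ for each $i$. The value $\alpha = 2/n$ is not ad hoc: after rewriting the main envy inequality, the minimum relaxation constant needed is $(n\alpha-1)/((n-1)\alpha^2)$, a single-variable expression whose unique critical point on $(0,1)$ is $\alpha = 2/n$, where it attains the value exactly $\frac{n^2}{4(n-1)}$. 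That is precisely why this constant appears in the theorem.

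Next I would pin down the rigid structure of any allocation on this instance. Exactly one agent (WLOG agent $1$) receives $o$, so $M_j = \emptyset$ for all $j \neq 1$. The EF$f(\alpha)$ condition between any two cake-only agents collapses to classical envy-freeness (there is no indivisible good to remove), and with identical agents on a homogeneous cake this forces equal cake lengths among them. Writing $x = |C_1| \in [0,1]$, each agent $j\neq 1$ must receive a cake piece of length $(1-x)/(n-1)$.

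The core of the argument is then a direct algebraic check. Assuming toward contradiction that the allocation is EF$(\frac{n^2}{4(n-1)}-\varepsilon)\alpha$, I plug the above structure into the envy inequality of an arbitrary cake-only agent $j$ toward agent $1$ and substitute $\alpha = 2/n$. After cancellation, the constant terms collapse (the telescoping identity $(n-2) - 2(n-1) + n = 0$) and the inequality simplifies to
\[
-\frac{n-2}{n-1}\, x \;\geq\; \frac{4\varepsilon}{n^2}.
\]
For $n\geq 3$ and $x \geq 0$ the left-hand side is non-positive while the right-hand side is strictly positive, giving the contradiction. I would also record the quick sanity check that $x = 0$ is compatible with agent $1$'s own envy: $\alpha = 2/n \geq (n-2)/(n(n-1)) = (1-\alpha)/(n-1)$ holds for $n\geq 3$, so agent $1$ is actually fully EF when $x=0$, confirming that the obstruction genuinely arises from the cake-only agents' envy of agent $1$ and not from a reversed envy.

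The only non-routine step is identifying the correct parameter $\alpha = 2/n$; the rest is bookkeeping. I expect the one-variable optimization of $(n\alpha - 1)/((n-1)\alpha^2)$ to be the cleanest way to both explain the choice of $\alpha$ and produce the matching constant $\frac{n^2}{4(n-1)}$. Once $\alpha$ is fixed, the proof is a substitution followed by an elementary sign comparison.
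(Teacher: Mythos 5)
Your proposal is correct and takes essentially the same route as the paper: it uses the identical hard instance (one indivisible good worth $2/n$, a homogeneous cake worth $\frac{n-2}{n}$, $n$ identical agents) and reaches the same arithmetic contradiction, the only cosmetic difference being that the paper obtains ``some cake-only agent has utility at most $\frac{n-2}{n(n-1)}$'' by a one-line averaging argument rather than by forcing equal cake shares through EF among the cake-only agents. One small slip in your framing sentence: to deduce non-existence of EF$\alpha$ from non-existence of EF$(\frac{n^2}{4(n-1)}-\varepsilon)\alpha$ you need $\frac{n^2}{4(n-1)}-\varepsilon \geq 1$ (so that the latter is the \emph{weaker} notion), not $\leq 1$; this is attainable with a positive $\varepsilon$ since $\frac{n^2}{4(n-1)}>1$ for $n\geq 3$.
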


\begin{proof}
The proof is derived from the following counterexample where we have $n$ identical agents, one indivisible good $o$, and one homogeneous cake $C$.
    \begin{center}
    \begin{tabular}{cccc}
        \toprule
        & $o$ & $C$ & $\alpha$ \\
        \midrule
        $u_i(\cdot), \forall i \in [n]$ & $\frac{2}{n}$ & $\frac{n-2}{n}$ & $\frac{2}{n}$ \\
        \bottomrule
    \end{tabular}
    \end{center}
Suppose the indivisible good $o$ is allocated to agent $n$. Then there must exist one agent $i \in [n-1]$ such that $u_i(A_i) \le \frac{n-2}{n(n-1)}$. For this agent $i$,
\begin{equation}
    u_i(A_n) - \alpha \cdot u_i(o) = \frac{2}{n} - (\frac{2}{n})^2 = \frac{2(n-2)}{n^2}.
\end{equation}
When $n \geq 3$, we have $\frac{2}{n} > \frac{1}{n-1}$, and thus $\frac{2(n-2)}{n^2} > \frac{n-2}{n(n-1)}$ which implies that agent $i$ fails to achieve EF$\alpha$. 

The above analysis can be tighter, and we have 
    \begin{eqnarray*}
        && u_i(A_n) - (\frac{n^2}{4(n-1)}-\varepsilon)\alpha \cdot u_i(o) \\
        &=& \frac{n-2}{n(n-1)} + \frac{4\varepsilon}{n^2} > u_i(A_i),
    \end{eqnarray*}
which implies agent $i$ fails to achieve EF$(\frac{n^2}{4(n-1)}-\varepsilon)\alpha$.
\end{proof}

On the positive side, we will show in Section~\ref{sec:extension} that EFM implies EF$n\alpha$ (Theorem~\ref{thm:efm implies efna}) which means that an EF$n\alpha$ allocation can be derived by using the polynomial algorithm for EFM in \citep{BeiLiLi20}.
This also implies that when $n \geq 3$, the best fairness guarantee under \effa~notion would be EF$\Theta(n)\alpha$.
Further, when all agents are identical, we show that the exact fairness guarantee under \effa~notion is EF$\frac{n^2}{4(n-1)}\alpha$.

\begin{theorem} \label{thm:identical efn/4a}
    When agents have identical utility functions, an EF$\frac{n^2}{4(n-1)}\alpha$ allocation always exist.
\end{theorem}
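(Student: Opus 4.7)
The plan is to analyze a simple two-phase algorithm: first, sort the indivisible goods in decreasing order of the (common) utility and greedily assign each good to an agent with the currently smallest bundle; second, distribute the cake by water-filling. Let $v_1 \leq \cdots \leq v_n$ denote the post-phase-one bundle values, and let $o^*$ be the last indivisible good given to the agent whose indivisible bundle has value $v_n$.

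Next I would invoke the standard analysis of greedy-for-identical-agents to obtain the EF1-type bound $v_n - v_i \leq u(o^*)$ for every $i \neq n$. Summing these over $i \neq n$ and using $\sum_i v_i = \alpha$ yields the key averaging inequality
\[
n\, v_n \;\leq\; \alpha + (n-1)\,u(o^*).
\]
After water-filling, each agent has final utility $U_i = \max(v_i,\tau)$ for the water level $\tau$. Because $U_i \geq \tau$, the envy any agent can have toward an agent $j$ is at most $v_j - \tau$, and this quantity is maximized when $j$ is the richest agent, giving envy $v_n - \tau$. I would then reduce to the case where the set $T^c$ of non-watered agents is exactly $\{n\}$ (see the obstacle remark below), so that $\tau = (1-v_n)/(n-1)$, the envy equals $(nv_n - 1)/(n-1)$, and combined with the averaging bound it is at most $u(o^*) - (1-\alpha)/(n-1)$.

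It then remains to verify $u(o^*) - (1-\alpha)/(n-1) \leq \frac{n^2\alpha}{4(n-1)}\,u(o^*)$. When $\alpha \geq 4(n-1)/n^2$, the coefficient $\frac{n^2\alpha}{4(n-1)}$ already exceeds one, so the EF1 guarantee from phase one immediately implies EF$\frac{n^2}{4(n-1)}\alpha$. Otherwise, clearing denominators and using the crude bound $u(o^*) \leq u(M) = \alpha$ reduces the target inequality to the perfect-square identity $(n\alpha - 2)^2 \geq 0$, which is immediate. Envy toward any non-richest but non-watered agent $j$ follows from the same template applied to the last good in $M_j$, and the edge case $u(C)=0$ needs no water-filling at all.

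The main obstacle will be rigorously reducing arbitrary configurations of $T^c$ to the singleton case $|T^c|=1$. The intended argument is a perturbation: if two or more agents skip water-filling, moving a small amount of value from one such ``rich'' agent into the cake weakly raises $\tau$ while leaving $v_n$ fixed, hence only shrinks the envy. Tracking how $\tau$ responds to shifting mass between the indivisible bundles and the cake, while verifying that each perturbation remains consistent with the greedy constraint $v_n - v_i \leq u(o^*)$ and with the bound $v_n \leq (\alpha + (n-1)u(o^*))/n$, is where the bulk of the technical work will lie.
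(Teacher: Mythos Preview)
Your two-phase algorithm, the averaging bound $n v_n \le \alpha + (n-1)u(o^*)$, and the $|T^c|=1$ computation culminating in $(n\alpha-2)^2\ge 0$ are all fine and are morally the same endgame as the paper's analysis of the single-good instance. The difficulty is exactly where you flag it, and your proposed perturbation does \emph{not} resolve it.

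The issue is directional. If $|T^c|>1$ and you move value $\varepsilon$ from a non-richest rich agent into the cake, then indeed $\tau$ goes up and $v_n$ stays put, so the envy $v_n-\tau$ shrinks. But the same perturbation also shrinks $\alpha$, hence shrinks the allowed slack $\frac{n^2}{4(n-1)}\alpha\,u(o^*)$. You end at a $|T^c|=1$ instance satisfying
\[
\text{envy}_1 \;\le\; \tfrac{n^2}{4(n-1)}\,\alpha_1\,u(o^*),
\]
with $\text{envy}_1<\text{envy}_0$ and $\alpha_1<\alpha_0$. From this you cannot deduce $\text{envy}_0 \le \tfrac{n^2}{4(n-1)}\alpha_0\,u(o^*)$: both sides moved, and in the same direction. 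A reduction of this type must go toward a \emph{harder} instance (one for which the required coefficient $g(n)$ in EF$g(n)\alpha$ is at least as large), not an easier one. Additionally, decreasing $v_j$ increases $v_n-v_j$, so the greedy constraint $v_n-v_j\le u(o^*)$ can break along the way, which you note but cannot simply ``track through.''

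The paper's reduction is different in exactly this respect. It does not nudge values between bundles and cake; instead it replaces the entire indivisible part of the envied bundle by a single good $g'$ (the largest good in $M_j$) and compares to the instance where $g'$ is the \emph{only} indivisible good. In that comparison the initial gap is weakly larger ($u(g')$ instead of something $\le u(g')$), the water level is weakly lower (the cake must be spread over all $n-1$ other agents from the start), and $\alpha$ is weakly smaller. All three effects push the required $g(n)$ \emph{up}, so bounding the single-good case suffices; the optimization over the value $x$ of the single good then lands on $x=2/n$ and $g(n)=\frac{n^2}{4(n-1)}$, matching your $(n\alpha-2)^2\ge 0$ computation. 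If you want to keep your more quantitative averaging framework, the fix is to reverse the perturbation direction (argue the worst case, not the best case) or to bypass the reduction entirely by lower-bounding $\tau$ directly for arbitrary $|T^c|$; the first is what the paper does.
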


\begin{proof}
    As described after the statement of Theorem~\ref{thm:identical efn/4a}, we adopt an algorithm which finds an EF1 allocation for indivisible goods first and then utilizes the water-filling procedure to allocate the cake, where the cake will be always allocated to the agents with the smallest utility.
    
    Based on the form $f(\alpha)=g(n)\cdot\alpha$, we then prove that for an arbitrary integer $n\ge 3$, via this algorithm, each instance that admits a function $g(n)$ can be modified to an instance with only one indivisible good and a cake that also admits a function lower bounded by $g(n)$.

    Under the instance $I_1$ that admits a function $g(n)$, we assume the envy of $g(n)\cdot \alpha$ occurs from agent $i$ to agent $j$.
    Since when $j$ is fixed, the smaller $u(i)$ can lead to a larger $g(n)$, we can assume  the bundle $i$ has the smallest utility among all bundles, which always receives divisible goods during the whole water-filling process and has exactly the utility at the water level (the utility of all agents receiving some cake).

    Since the allocation is EF1 before allocating the cake, the difference between the utilities of bundle $j$ and bundle $i$ before allocating the divisible goods is upper bounded by some good $g\in M_j$. We then assume $g'$ is the good with the largest utility in $M_j$, which is the corresponding $o$ used in the condition of EF$f(\alpha)$.
    We can then compare to the instance $I_2$ where only one indivisible good with utility $g'$ is given to agent $j$.
    In both instances, the $g'$s corresponding to the $o$ used in the condition of EF$f(\alpha)$ are the same but the initial difference between the utilities of agent $j$ and $i$ is larger under $I_2$, which is from a value weakly less than $u(g)\le u(g')$ to exactly $u(g')$.
    Further, the divisible goods are required to be allocated to exactly $n-1$ agents from the beginning in $I_2$, which can makes a lower water level and a larger difference after terminating the water-filling procedure.
    With the decrease of $\alpha$ since there exists only one indivisible good $g'$ in $I_2$, instance $I_2$ admits a larger $g(n)$ than that in $I_1$.

    We can then focus on the case where there is only one indivisible good and a cake.
    We assume the value over the indivisible good is $x$ and the value over the cake is $1-x$, where $\alpha$ here is exactly $x$.
    Without loss of generality, we assume $x\ge \frac{1-x}{n-1}$ otherwise we can directly reach a PROP allocation.
    We then need to find an $x$ which minimizes the $g(n)$ in the inequality $x-g(n)\cdot x^2\le \frac{1-x}{n-1}$, which is the condition for EF$f(\alpha)$.
    By some calculus, the optimal $x$ is exactly $\frac2n$, corresponding to the counterexample in Theorem~\ref{thm:efalpha3agents}.
\end{proof}

However, such an idea is not applied to the non-identical agents setting, where the presence of multiple indivisible goods may complicate the envy graph and prevent us from reducing it to a case with only one indivisible good. 

\section{Existence and Computation of PROP$f(\alpha)$} \label{sec:propalpha}
In this section, we focus on the proportionality up to a fractional good (i.e., PROP$\alpha$ and PROP$f(\alpha)$). 
We first prove by presenting a polynomial algorithm that a PROP$\alpha$ allocation always exists in the mixed good setting. Subsequently, we consider the \propfa~notion and show a lower bound of $f(\alpha)$, giving an asymptotically tight characterization for the existence of PROP$f(\alpha)$.

\subsection{The Algorithm}

The complete algorithm for finding a PROP$\alpha$ allocation is shown in Algorithm~\ref{alg:propalpha}.
Conceptually, Algorithm~\ref{alg:propalpha} performs the ``moving-knife'' procedure on indivisible goods and the cake separately through $n - 1$ rounds (Steps \ref{propALG-iterBegin}-\ref{propALG-iterEnd}). In each round, one agent is allocated with a bundle that yields a utility that achieves PROP$\alpha$. The bundle is firstly filled by indivisible goods (Step \ref{propALG-lastInd}) until there exists an indivisible good $o$ such that after adding $o$ to the bundle, some agent $j$ will be satisfied with the bundle. Then, depending on whether we can make some agent satisfied by only adding some cake to the bundle, we execute either Case 1 (Steps \ref{propALG-case1Begin}-\ref{propALG-case1End}) or Case 2 (Steps \ref{propALG-case2Begin}-\ref{propALG-case2End}).

\begin{itemize}
    \item Case 1: when the cake is not large enough to make any agent satisfied, we simply add the indivisible good $o$ to the bundle and allocate it to agent $j$.
    \item Case 2: we first find out the minimum required piece of cake for each agent that will make her satisfied (Step \ref{propALG-RW}). This step can be implemented through the RW model. Note that the condition of entering the second case ensures that at least one agent can be satisfied by adding some piece of the cake. we then find the optimal agent $i^*$ that requires the minimum piece of cake among all agents and allocate the piece of cake together with the bundle of indivisible goods to her.
\end{itemize}

After allocating the bundles to $n - 1$ agents, we give all the remaining goods to the last agent (Step \ref{propALG:lastAgent}).
Assume, w.l.o.g., that agents $1, 2, \dots, n$ receive their bundles in order.

We remark that when all goods are indivisible, our algorithm omits Case 2 and thus degenerates to the well-known bag-filling procedure. When the whole good is a divisible cake, the algorithm only executes Case 2 and then becomes the classical moving-knife algorithm. Though Algorithm~\ref{alg:propalpha} is a natural extension of the algorithms in the divisible goods setting and the indivisible goods settings, we claim that the analysis is non-trivial as shown in the next subsection.

\begin{algorithm}[t]
    \caption{Finding a PROP$\alpha$ allocation}
    \label{alg:propalpha}
    \KwData{Agents $N$, indivisible goods $M$ and cake $C$}
    \KwResult{A PROP$\alpha$ allocation $(A_1, A_2, \dots, A_n)$}
    $\hat{M} \leftarrow M, \hat{C} \leftarrow C$\;
    \While {$|N| \geq 2$} { \label{propALG-iterBegin}
        $B \leftarrow \emptyset$\;
        Add one indivisible good in $\hat{M}$ at a time to $B$ until adding the next indivisible good $o$ will cause $u_j(B \cup \{o\}) \geq 1 / n - \alpha_j \cdot u_j(g)$ for some agent $j$ and some good $g \in M \setminus (B \cup \{o\})$, or $M \setminus (B \cup \{o\}) = \emptyset$\; \label{propALG-lastInd}
        \eIf{$~\forall i \in N$ and $g \in M \setminus B$, $u_i(B \cup \hat{C}) < 1 / n - \alpha_i \cdot u_i(g)$} { \label{propALG-case1Begin}
            // Case 1: allocate with only indivisible goods\;
            $A_{j} \leftarrow B \cup \{o\}$\;
            $N \leftarrow N \setminus \{j\}, \hat{M} \leftarrow \hat{M} \setminus (B \cup \{o\})$\; \label{propALG-case1End}
        }{ \label{propALG-case2Begin}
            // Case 2: allocate with cake\; 
            Suppose now $\hat{C} = [a, b]$. For all $i \in N$, if $u_i(B\cup [a,b]) \geq 1 / n - \alpha_i \cdot u_i(g)$, let $x_i$ be the leftmost point such that $u_i(B \cup [a, x_i) ) = 1 / n - \alpha_i \cdot u_i(g)$ for some good $g \in M \setminus B$; otherwise, let $x_i = b$\; \label{propALG-RW}
            $i^* \leftarrow \arg\min_{i \in N} x_i$\;
            $A_{i^*} \leftarrow B \cup [a, x_{i^*})$\;
            $N \leftarrow N \setminus \{i^*\}, \hat{M} \leftarrow \hat{M} \setminus B, \hat{C} \leftarrow \hat{C} \setminus [a, x_{i^*})$\; \label{propALG-case2End} \label{propALG-iterEnd}
        } 
    }
    Give all the remaining goods to the last agent\; \label{propALG:lastAgent}
    \Return $(A_1, A_2, \dots, A_n)$\;
\end{algorithm}

\subsection{Analysis}

Our main result for the existence and computation of PROP$\alpha$ allocations is as follows.
\begin{theorem}\label{thm:propalpha}
    For any number of agents, Algorithm~\ref{alg:propalpha} returns a PROP$\alpha$ allocation in polynomial time.
\end{theorem}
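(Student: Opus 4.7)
The plan has two parts: polynomial runtime and PROP$\alpha$ correctness. The runtime is straightforward: the outer while loop runs at most $n-1$ times, Step~\ref{propALG-lastInd} processes at most $m$ indivisible goods per iteration, and Step~\ref{propALG-RW} issues at most $n$ Robertson--Webb cut queries per iteration, so the total is polynomial.

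For PROP$\alpha$, I would first check the $n-1$ agents allocated inside the while loop: each satisfies PROP$\alpha$ by construction. In Case~1, the stopping rule directly enforces $u_j(B \cup \{o\}) \geq 1/n - \alpha_j u_j(g)$ for some $g \in M \setminus (B \cup \{o\}) = M \setminus M_j$; in Case~2, the cut point $x_{i^*}$ is defined so that $u_{i^*}(B \cup [a, x_{i^*})) = 1/n - \alpha_{i^*} u_{i^*}(g)$ for some $g \in M \setminus B = M \setminus M_{i^*}$.

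The technical core is PROP$\alpha$ for the last agent $\ell$. My plan is to track $u_\ell(\hat{M}_k \cup \hat{C}_k)$ round by round and maintain by induction an invariant of the form
\[
u_\ell(\hat{M}_k \cup \hat{C}_k) + \alpha_\ell\, u_\ell(g^*_k) \;\geq\; |N_k|/n
\]
for some good $g^*_k \in M$, trivially at $k = 1$ since $u_\ell(M \cup C) = 1$. In a Case~2 round, I would use the bound $u_\ell(A_{j_k}) \leq 1/n - \alpha_\ell u_\ell(g^\ell_k)$ derived from $\ell$'s own cut point $x_\ell$ (with a strict-inequality variant available when $x_\ell = b$), which absorbs the loss and carries the invariant forward with $g^*_{k+1} = g^*_k$. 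In a Case~1 round, I would set $g^*_{k+1} := o_k$ and invoke the Case~1 entry condition $u_\ell(B_k \cup \hat{C}_k) + \alpha_\ell u_\ell(o_k) < 1/n$, using also the fact that $\hat{C}_k$ is untouched across the round.

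The main obstacle is the Case~1 inductive step: the naive overshoot bound $u_\ell(A_{j_k}) < 1/n + (1-\alpha_\ell) u_\ell(o_k)$ can exceed $1/n + \alpha_\ell u_\ell(o_k)$ when $\alpha_\ell$ is small, so it would break the induction on its own. The resolution is to use the strengthened entry condition that pulls the cake term $u_\ell(\hat{C}_k)$ into the estimate; since no cake is consumed in a Case~1 round, this cushion partially offsets the excess and lets the update $g^*_{k+1} := o_k$ close the inductive step. After round $n-1$, the invariant, now with $g^*_n \in M \setminus M_\ell$, yields $u_\ell(A_\ell) + \alpha_\ell u_\ell(g^*_n) \geq 1/n$, which is exactly PROP$\alpha$ for $\ell$.
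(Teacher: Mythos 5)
Your high-level strategy---reduce everything to an ``enough goods remain'' invariant for the last agent and induct over the rounds---is the same shape as the paper's \cref{lem:enough}, and your Case~2 step and your diagnosis of where the difficulty lies (the Case~1 overshoot) are both correct. But the resolution you propose for Case~1 does not close the induction. Your inductive hypothesis $u_\ell(\hat{M}_k \cup \hat{C}_k) + \alpha_\ell u_\ell(g^*_k) \ge |N_k|/n$ retains only a \emph{single} absorber term $\alpha_\ell u_\ell(g^*_k)$ and forgets the per-round savings $\alpha_\ell u_\ell(g^\ell_i)$ that each earlier Case~2 round banked below $1/n$. Working out your Case~1 step, with the update $g^*_{k+1}:=o_k$ you need $u_\ell(\hat{C}_k) \ge \alpha_\ell u_\ell(g^*_k) + (1-2\alpha_\ell)u_\ell(o_k)$, and the hypothesis gives you no lower bound on $u_\ell(\hat{C}_k)$ beyond $0$: if the cake has been exhausted by earlier Case~2 rounds and $g^*_k$ is a positive-value good from a previous Case~1 round, the required inequality reduces to $0 \ge (1-\alpha_\ell)u_\ell(o_k)$, which is false whenever $\alpha_\ell<1$ and $u_\ell(o_k)>0$. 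So the local step ``IH at round $k$ plus round-$k$ facts implies IH at round $k+1$'' is simply not valid; the invariant happens to be true, but only for global reasons your induction has discarded.

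The two missing ingredients are exactly the paper's \cref{clm:alpha} and the global summation in \cref{lem:enough}. First, the occurrence of a Case~1 round at index $k$ forces $u_\ell(\hat{C})<1/n$ and each earlier cake piece to be worth at most $1/n$ to $\ell$, hence $\alpha_\ell \ge \frac{n-k}{n}$; this lower bound on the indivisibility ratio is the linchpin that lets $(1-\alpha_\ell)$-sized overshoots be traded against $\alpha_\ell$-sized savings, and it appears nowhere in your argument. Second, the trade-off is inherently cumulative: the paper bounds $\sum_{i<j}u_j(A_i)$ by summing $k-1$ savings terms $\alpha_j u_j(g_{ij})$ against up to $j-k$ overshoot terms and shows the net deficit is at most one $\alpha_j\max_p u_j(o_p)$, using $\alpha_j\ge\frac{n-k}{n}$ and a \cref{clm:o}-type comparison of the candidate goods across rounds. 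To repair your proof you would have to strengthen the invariant to carry the accumulated savings (or equivalently abandon the round-by-round framing for the global sum), and you would also need to justify that the final witness good lies in $M\setminus M_\ell$ --- the paper needs a separate case analysis (Parts~(a) and~(b) of \cref{lem:enough}) precisely because the maximizing $o_p$ may end up inside agent $\ell$'s own bundle. A last, smaller point: ``the first $n-1$ agents are satisfied by construction'' presupposes that every round is well-defined (some agent can always be filled up to her threshold), which requires the enough-goods invariant for every remaining agent, not only the last one.
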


To prove Theorem~\ref{thm:propalpha}, we will utilize some useful concepts and facts.
Let $k$ be the first agent that receives her bundle with only indivisible goods in Steps~\ref{propALG-case1Begin} to~\ref{propALG-case1End}.
In other words, agents $1, \dots, k-1$ are assigned their bundles with some pieces of cake in Steps~\ref{propALG-case2Begin} to~\ref{propALG-case2End}.
Moreover, for distinction, we set $k \leftarrow n$ if all the first $n-1$ agents are assigned in Steps~\ref{propALG-case2Begin} to~\ref{propALG-case2End}. 
The agent $k$ plays an important role in bounding the ratio $\alpha_i$ for the agent after $k$. The relation is shown below.

\begin{claim}\label{clm:alpha}
$\alpha_i \geq \frac{n - k}{n}$ for each agent $i > k$.
\end{claim}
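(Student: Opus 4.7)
The plan is to translate the claim about $\alpha_i$ into a bound on $u_i(C)$ and then track how the cake is distributed across the rounds of Algorithm~\ref{alg:propalpha}. Since utilities are normalized so that $u_i(M)+u_i(C)=1$, we have $\alpha_i = 1 - u_i(C)$, and the desired inequality $\alpha_i \ge (n-k)/n$ is equivalent to $u_i(C) \le k/n$. Fix any agent $i>k$; by additivity of $u_i$ over the cake,
\[
u_i(C) = \sum_{j=1}^{k-1} u_i(C_j) + u_i(C_k) + \sum_{j=k+1}^{n} u_i(C_j),
\]
where $C_j$ is the cake piece received by agent $j$. Since agent $k$ is handled by Case~1, $C_k = \emptyset$; and since $C_{k+1} \cup \cdots \cup C_n$ partitions the leftover cake $\hat{C}_k$ present at the start of round $k$, the last sum equals $u_i(\hat{C}_k)$. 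It therefore suffices to show $u_i(\hat{C}_k) \le 1/n$ and $u_i(C_j) \le 1/n$ for every $j < k$.

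For the leftover $\hat{C}_k$, I would invoke the Case~1 triggering condition at round $k$: for every remaining agent $\ell$ and every $g\in M\setminus B_k$, $u_\ell(B_k \cup \hat{C}_k) < 1/n - \alpha_\ell u_\ell(g)$. Applying this with $\ell = i$ (who is still in the pool because $i > k$) and dropping the non-negative term $\alpha_i u_i(g)$ yields $u_i(\hat{C}_k) < 1/n$. For each $j<k$, round $j$ is handled by Case~2, agent $i$ is still among the candidates, so the algorithm defines $x_i$ and allocates $C_j = [a_j, x_{i^*_j})$ with $x_{i^*_j}=\min_\ell x_\ell \le x_i$. I would then split into two subcases based on how $x_i$ is determined: if $x_i < b_j$, then $u_i(B_j\cup[a_j,x_i)) = 1/n - \alpha_i u_i(g)$ for some witness $g$, giving $u_i([a_j,x_i))\le 1/n - u_i(B_j)\le 1/n$; if $x_i = b_j$, then by the algorithm's definition $u_i(B_j\cup\hat{C}_j) < 1/n - \alpha_i u_i(g)$ for all $g$, so $u_i([a_j,b_j)) = u_i(\hat{C}_j) < 1/n$. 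In either subcase, $u_i(C_j) \le u_i([a_j, x_i)) \le 1/n$.

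Summing the bounds gives $u_i(C) < k/n$, hence $\alpha_i > (n-k)/n$, establishing the claim. The main subtlety is the case analysis for rounds $j<k$: I must be careful that the leftmost-point definition of $x_i$ (which implicitly picks an optimal witness $g$) still yields the clean bound $u_i([a_j,x_i))\le 1/n$ in both the satisfiable and unsatisfiable regimes, and that the algorithm's choice of $i^*_j = \arg\min_\ell x_\ell$ indeed guarantees $x_{i^*_j} \le x_i$ so that the piece $C_j$ is, from agent $i$'s perspective, no larger than the piece agent $i$ would have needed. Beyond this case split, the argument is routine bookkeeping over the algorithm's rounds.
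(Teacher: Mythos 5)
Your proof is correct and follows essentially the same route as the paper: rewrite $\alpha_i \ge (n-k)/n$ as $u_i(C)\le k/n$, bound each piece $C_j$ ($j<k$) by $1/n$ because agent $i$ (still unassigned) would otherwise have been satisfied in round $j$, and bound the leftover cake at round $k$ by $1/n$ via the Case~1 condition. The only difference is that you spell out the $x_i < b_j$ versus $x_i = b_j$ case split that the paper compresses into one sentence.
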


\begin{proof}
Fix an agent $i > k$. 
According to the definition of $k$, each agent $j$ in $\{1, \dots, k-1\}$ will be assigned a bundle with some pieces of cake $C_j$. It is clear to see that agent $i$'s valuation on each $C_j$ is at most $1 /n$ since otherwise agent $i$ will be an agent that receives a bundle before agent $k$.

Next, we focus on the remaining cake. Note that if the condition in Step~\ref{propALG-case1Begin} is true, i.e., $u_i(B \cup \hat{C}) < 1 / n - \alpha_i \cdot u_i(g)$ for all goods $g \in M \setminus B$, this also means that the remaining cake $\hat{C}$ worth at most $1 / n$ to agent $i$.

Combined the above arguments, we have $u_i(C) \leq k / n$ which completes the proof.
\end{proof}

Let $o_i$ be the indivisible good $o$ as defined in Step~\ref{propALG-lastInd} for the $i$-th iteration of the while-loop.
For each agent $j$, we define $g_{ij} = \arg \max_{g \in (M \setminus M_i) \cup {o_i}} u_j(g)$. Intuitively, $g_{ij}$ is the good that, when agent $i$ conducts the ``moving-knife'' procedure (and finally obtains $A_i$), for agent $j$, the most valuable good besides the bundle $A_i$ without $o_i$. The next claim makes a connection between the above two definitions.

\begin{claim}\label{clm:o}
$u_j(g_{ij}) \geq u_j(o_p)$ for any agents $i, j, p$ with $i \neq n$ and $o_p \neq o_{i-1}$.
\end{claim}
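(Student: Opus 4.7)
The plan is to reduce the desired inequality $u_j(g_{ij}) \ge u_j(o_p)$ to a set-membership statement: if I can show that $o_p \in (M\setminus M_i) \cup \{o_i\}$, then by the very definition of $g_{ij}$ as the argmax over that set, $u_j(g_{ij}) \ge u_j(o_p)$ follows immediately for every agent $j$. The claim then becomes a purely structural question about which indivisible goods can end up in $M_i$.

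I would split on the relationship between the iteration indices $p$ and $i$. If $p=i$, then $o_p = o_i$ and membership is immediate. If $p>i$, then by construction $o_p$ still sits in $\hat{M}$ at the start of iteration $p$, so it was not removed during the earlier iteration $i$, giving $o_p \notin M_i$. The substantive case is $p < i$, which is where the hypothesis $o_p \neq o_{i-1}$ must be used.

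For $p<i$, I plan a ``persistence'' induction on $i-p$: if $o_p$ is still in $\hat{M}$ at the start of iteration $i$, then $o_p$ has been continuously re-picked as the trigger, i.e., $o_p = o_{p+1} = \cdots = o_{i-1}$. The inductive step relies on the following observation about iteration $t$: treating the leftover trigger $o_{t-1}$ (which still lies in $\hat{M}$) as the next candidate presented at Step~\ref{propALG-lastInd}, either adding it causes some agent's satisfaction threshold to be reached and the loop terminates with $o_t = o_{t-1}$, or $o_{t-1}$ is absorbed into $B_t$ and subsequently allocated to agent $t$. Since Case~1 also allocates the trigger $o_t$, the only way $o_{t-1}$ can survive iteration $t$ is $o_t = o_{t-1}$ with Case~2 having been executed. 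Chaining this observation from $p+1$ up to $i-1$ yields $o_p = o_{i-1}$ whenever $o_p$ survives; the contrapositive gives $o_p \notin \hat{M}$ at iteration $i$ under the hypothesis $o_p \neq o_{i-1}$, and hence $o_p \notin M_i$.

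I expect the main obstacle to be making the persistence induction rigorous against Algorithm~\ref{alg:propalpha} as written, since Step~\ref{propALG-lastInd} does not prescribe any particular order in which to feed goods into $B$. To fix this, I would either declare a canonical implementation --- always considering the survivor of the previous iteration first --- and verify that this does not alter correctness of Theorem~\ref{thm:propalpha}, or argue more carefully that the monotonicity of the satisfaction condition $u_j(B \cup \{o\}) \ge 1/n - \alpha_j u_j(g)$ in $B$ lets us always designate the deferred trigger as ``the next good'' at the moment it first qualifies. Either route reduces the structural question to the clean dichotomy ``re-trigger or absorb'' used in paragraph three.
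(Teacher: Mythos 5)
Your proposal is correct and follows essentially the same route as the paper: reduce the inequality to membership of $o_p$ in $(M\setminus M_i)\cup\{o_i\}$ via the definition of $g_{ij}$, then split on $p>i$, $p=i$, and $p<i$. The paper disposes of the substantive case $p<i$ with the one-line assertion that $o_p$ ``is not observed at round $i$''; your persistence induction showing that a surviving trigger must satisfy $o_p=o_{p+1}=\cdots=o_{i-1}$, together with your observation that this requires the leftover trigger to be the first candidate examined in Step~\ref{propALG-lastInd}, is precisely the justification that assertion implicitly relies on.
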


\begin{proof}
For the first case when $p>i$, because $o_p$ exists in round $p$, we have $o_p\notin A_i$ and then verify the case from the definitions of $g_{ij}$.

When $p<i$, since $o_p\neq o_{i-1}$, $o_p$ is not observed at round $i$ and we can prove the statement by the definitions of $g_{ij}$.

When $p=i$, the statement obviously holds.
\end{proof}

We are ready to prove the following lemma.

\begin{lemma} \label{lem:enough}
Before the $j$-th iteration, the remaining goods are enough for $j$ to achieve PROP$\alpha$. Specifically, we have
$$u_j (\hat{M} \cup \hat{C}) \geq (n-j+1)(\frac{1}{n} - \alpha_j \cdot u_j(g_{jj})),$$
where $\hat{M}\cup\hat{C}$ represents the remaining goods just before the $j$-th iteration.
\end{lemma}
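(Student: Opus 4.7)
The plan is to prove the inequality by upper bounding $\sum_{i<j} u_j(A_i)$; since $u_j(M\cup C) = 1$, this is equivalent to establishing
\[\sum_{i<j} u_j(A_i) \le \frac{j-1}{n} + (n-j+1)\alpha_j u_j(g_{jj}).\]
First, I would bound $u_j(A_i)$ for each $i<j$. The inner loop's stopping rule guarantees that $B_i$ alone has not yet triggered the condition, hence $u_j(B_i) < \frac{1}{n} - \alpha_j u_j(g_{ij})$. In Case~2, since agent $j$ is unassigned before iteration $j$ we have $i^*\neq j$ and $x_{i^*}\le x_j$; combined with the definition of $x_j$ (or the analogous inequality when $x_j$ equals the right endpoint of $\hat C$), this yields $u_j(A_i) \le \frac{1}{n} - \alpha_j u_j(g_{ij})$. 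In Case~1, $u_j(A_i) = u_j(B_i) + u_j(o_i) < \frac{1}{n} - \alpha_j u_j(g_{ij}) + u_j(o_i)$. Summing these estimates and letting $S_1$ denote the Case~1 iterations among $\{1,\ldots,j-1\}$,
\[\sum_{i<j} u_j(A_i) < \frac{j-1}{n} - \alpha_j \sum_{i<j} u_j(g_{ij}) + \sum_{i\in S_1} u_j(o_i).\]

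Two structural observations then close the gap. First, for each $i\in S_1$, the good $o_i\in M_i$ lies in $M\setminus\hat M_j \subseteq M\setminus B_j$, so $u_j(o_i)\le u_j(g_{jj})$, giving $\sum_{i\in S_1} u_j(o_i) \le |S_1|\cdot u_j(g_{jj})$. Second, since the built bundles $B_1,\ldots,B_{j-1}$ are pairwise disjoint, the maximiser $g_{jj}$ can belong to at most one of them, call it $B_{i_0}$ if it exists; for every other $i<j$, $g_{jj}\in M\setminus B_i$ and thus $u_j(g_{ij})\ge u_j(g_{jj})$. In the typical case (no such $i_0$), this gives $\sum_{i<j} u_j(g_{ij}) \ge (j-1)u_j(g_{jj})$.

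Finally, applying \cref{clm:alpha}, for $j>k$ we have $\alpha_j\ge (n-k)/n$, so $|S_1| \le j-k \le n-k \le n\alpha_j$; for $j\le k$, $S_1=\emptyset$ and the bound is immediate. Combining everything,
\[\sum_{i<j}u_j(A_i) \le \frac{j-1}{n} + \bigl(|S_1| - \alpha_j(j-1)\bigr)u_j(g_{jj}) \le \frac{j-1}{n} + (n-j+1)\alpha_j u_j(g_{jj}),\]
where the last inequality uses $|S_1|\le n\alpha_j$.

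The main obstacle is the exceptional index $i_0$ with $g_{jj}\in B_{i_0}$: the naive bound $u_j(g_{ij})\ge u_j(g_{jj})$ fails at $i=i_0$, creating a deficit in $\sum_{i<j} u_j(g_{ij})$ that must be absorbed. This is addressed using the tight relation $u_j(g_{jj}) + \alpha_j u_j(g_{i_0 j}) < \frac{1}{n}$, which follows from combining $u_j(B_{i_0})\ge u_j(g_{jj})$ with the stopping condition $u_j(B_{i_0}) < \frac{1}{n}-\alpha_j u_j(g_{i_0 j})$; threading this relation through the same estimate recovers the required bound.
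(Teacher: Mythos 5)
Your decomposition is genuinely different from the paper's: instead of splitting on whether $\arg\max_{k\le p\le j-1}u_j(o_p)$ equals $o_{j-1}$ and lies in $A_j$ (which forces the paper to introduce the auxiliary index $r$ and a separate chain of estimates for that part), you compare every term directly to $u_j(g_{jj})$ and close with the single inequality $|S_1|\le j-k\le n\alpha_j$. In the typical case this is correct and cleaner. However, your treatment of the exceptional index $i_0$ has a genuine gap. After substituting your bounds, what must be shown is
\[
|S_1|\,u_j(g_{jj}) \;\le\; (n-1)\alpha_j\,u_j(g_{jj}) + \alpha_j\,u_j(g_{i_0 j}),
\]
and the relation $u_j(g_{jj})+\alpha_j u_j(g_{i_0 j})<\tfrac{1}{n}$ only converts the deficit $\alpha_j\bigl(u_j(g_{jj})-u_j(g_{i_0 j})\bigr)$ into an additive error of at most $\alpha_j/n$; that error must then be absorbed by the slack $(n\alpha_j-|S_1|)\,u_j(g_{jj})$, which can be far smaller. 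Concretely, when $u_j(g_{i_0 j})$ is close to $0$, the requirement degenerates to $|S_1|\le(n-1)\alpha_j$, and with $|S_1|=j-k$ and $\alpha_j$ near its lower bound $\tfrac{n-k}{n}$ from \cref{clm:alpha} this reads $j-k\le(n-1)\tfrac{n-k}{n}$, which fails for $j=n$. So ``threading the relation through'' does not recover the bound.

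The gap is fixable within your framework by sharpening a different estimate. Every $o_i$ with $i\in S_1$ lies outside $B_{i_0}$: for $i\neq i_0$ because $o_i\in M_i$ and the bundles are disjoint, and for $i=i_0$ because the pending good is by construction not in $B_{i_0}$. Since $g_{i_0 j}$ maximizes $u_j$ over $(M\setminus M_{i_0})\cup\{o_{i_0}\}=M\setminus B_{i_0}$, this gives $u_j(o_i)\le u_j(g_{i_0 j})$ for all $i\in S_1$, hence $\sum_{i\in S_1}u_j(o_i)\le|S_1|\,u_j(g_{i_0 j})$. Substituting this in the only problematic subcase $u_j(g_{i_0 j})<u_j(g_{jj})$ yields $\sum_{i<j}u_j(A_i)\le\frac{j-1}{n}-\alpha_j(j-2)u_j(g_{jj})+(|S_1|-\alpha_j)\,u_j(g_{i_0 j})$; if $|S_1|\ge\alpha_j$ the last term is at most $(n-1)\alpha_j u_j(g_{jj})-\alpha_j(j-2)u_j(g_{jj})+\alpha_j(j-2)u_j(g_{jj})$, i.e.\ bounded via $u_j(g_{i_0 j})\le u_j(g_{jj})$ and $|S_1|\le n\alpha_j$, and if $|S_1|<\alpha_j$ it is nonpositive; either way the target follows. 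With this replacement your argument is complete and, in my view, simpler than the paper's case analysis.
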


\begin{proof}
We first look at the case with $j \leq k$. By definition of $k$, each agent in $\{1, \dots, k - 1\}$ is assigned her bundle in Steps~\ref{propALG-case2Begin} to~\ref{propALG-case2End}. In other words, from the viewpoint of $j$, the value of each bundle $A_i$ with $i \in \{1, \dots, j - 1\}$ is less than $1/n - \alpha_j \cdot u_j(g) \leq 1/n$ for some $g \not \in A_i$ since, otherwise, agent $j$ will be assigned before agent $i$.
This means that, agent $j$'s value on the remaining goods is at least $$1 - \frac{j-1}{n} = \frac{n-j+1}{n} \geq (n-j+1)(\frac{1}{n} - \alpha_j \cdot u_j(g_{jj})).$$ 
We note that $k = n$ belongs to this case as the analysis only focus on the fact that each agent in $\{1, \dots, k - 1\}$ is assigned her bundle in Steps~\ref{propALG-case2Begin} to~\ref{propALG-case2End}.

We next consider the case with $j > k$. Similar to the above case, we can show that agent $j$'s value on each bundle $A_i$ with $i \in \{1, \dots, k - 1\}$ is weakly less than $1/n - \alpha_j \cdot u_j(g_{ij})$. We then focus on the $k$-th assignment to the $(j-1)$-st assignment which can either be implemented in Steps~\ref{propALG-case1Begin} to~\ref{propALG-case1End} or in Steps~\ref{propALG-case2Begin} to~\ref{propALG-case2End}. 
If the $i$-th assignment, where $k \leq i \leq j-1$, is executed by Steps~\ref{propALG-case2Begin} to~\ref{propALG-case2End}, we can similarly obtain that $u_j (A_i) \leq 1/n - \alpha_j \cdot u_j(g_{ij})$. For the situation with Steps~\ref{propALG-case1Begin} to~\ref{propALG-case1End}, 
we can also have $u_j (A_i) \leq 1/n - \alpha_j \cdot u_j(g_{ij}) + u_j(o_i)$. 
Now we are ready to compute agent $j$'s value over the remaining goods. Our discussion is divided into two parts: (a) $\arg \max_{k \leq p \leq j-1} u_j(o_p) = \{o_{j-1}\}$ and $o_{j-1} \in A_j$; (b) other situations, i.e., either (i) $\arg \max_{k \leq p \leq j-1} u_j(o_p) \neq \{o_{j-1}\}$ or (ii) $\arg \max_{k \leq p \leq j-1} u_j(o_p) = \{o_{j-1}\}$ but $o_{j-1} \not \in A_j$. 

We first consider Part (b), we can get:
\begin{align*}
& \quad u_j(\hat{M} \cup \hat{C}) \\
& = 1 - \sum_{1 \leq i \leq j-1} u_j(A_i) \\
&\geq 1 - \sum_{1 \leq i \leq k-1} \left(\frac{1}{n} - \alpha_j \cdot u_j(g_{ij})\right)  \\
& \quad \quad \quad - \sum_{k \leq i \leq j-1} \left(\frac{1}{n} - \alpha_j \cdot u_j(g_{ij}) + u_j(o_i)\right) \\
&= \frac{n-j+1}{n} + \sum_{1 \leq i \leq k-1} (\alpha_j \cdot u_j(g_{ij})) \\
& \quad \quad \quad + \sum_{k \leq i \leq j-1} (\alpha_j \cdot u_j(g_{ij}) - u_j(o_i)) \\
&\geq \frac{n-j+1}{n} + \alpha_j \sum_{1 \leq i \leq k-1} \max_{k \leq p \leq j-1} u_j(o_p) \\
& \quad \quad \quad  + (\alpha_j - 1) \sum_{k \leq i \leq j-1} \max_{k \leq p \leq j-1} u_j(o_p)
\end{align*}
Here, for the second inequality, we can first observe that $u_j(g_{ij})\ge \max_{k \leq p \leq j-1} u_j(o_p)$ for each $1\le i\le k-1$ from Claim~\ref{clm:o}.
For the third term in this inequality, we achieve this by presenting the fact that $\alpha_j \cdot u_j(g_{ij}) - u_j(o_i)\ge  (\alpha_j - 1)\max_{k \leq p \leq j-1} u_j(o_p)$ is satisfied for each $k \leq i \leq j-1$.
We can utilize $u_j(o_i)\le \max_{k \leq p \leq j-1} u_j(o_p)$ under the case that $u_j(g_{ij})\ge \max_{k \leq p \leq j-1} u_j(o_p)$ and apply $u_j(o_i)\le u_j(g_{ij})$ under the case that $u_j(g_{ij})<\max_{k \leq p \leq j-1} u_j(o_p)$ to show this. 
By rearranging the terms, this is equivalent to:
\begin{align*}
&\frac{n-j+1}{n} + \left(\alpha_j (k-1) + (\alpha_j - 1) (j-k) \right)\max_{k \leq p \leq j-1} u_j(o_p)\\
\geq& \frac{n-j+1}{n} + \left( \frac{(n-k)(k-1) - k(j-k)}{n} \right) \max_{k \leq p \leq j-1} u_j(o_p)\\
=& \frac{n-j+1}{n} + \left( \frac{-(n-k) + k (n-j)}{n} \right) \max_{k \leq p \leq j-1} u_j(o_p)\\
\geq& \frac{n-j+1}{n}  -\left( \frac{(n-j+1)(n-k)}{n} \right) \max_{k \leq p \leq j-1} u_j(o_p)\\
\geq& (n-j+1) \left(\frac{1}{n} - \alpha_j \cdot \max_{k \leq p \leq j-1} u_j(o_p)\right)\\
\geq& (n-j+1)\left( \frac{1}{n} - \alpha_j \cdot u_j(g_{jj})\right).
\end{align*}
The first and third inequalities are from Claim~\ref{clm:alpha}, while the fourth inequality is from the assumption of Part (b).
From the assumptions of Part (b), we can conclude that $\arg \max_{k \leq p \leq j-1} u_j(o_p)$ is not in the bundle $A_j$ so $u_j(g_{jj}) \geq \max_{k \leq p \leq j-1} u_j(o_p)$.

We next consider Part (a). 
We first define a term $r$ which is the largest value between $k$ and $j-1$ such that $o_r\neq o_{j-1}$. Since $o_k\in A_k$ which cannot be $o_{j-1}$, such $r$ must exist.
We observe that each agent $i$ from $r+1$ to $j-1$ gets her bundle with mixed goods in Steps~\ref{propALG-case2Begin} to~\ref{propALG-case2End} since $o_{i}=o_{j-1} \in A_j$. The analysis of Part (a) is similar to the one of Part (b) with some mild modifications.
\begin{align*}
& \quad u_j(M \cup C) \\
& = 1 - \sum_{1 \leq i \leq j-1} u_j(A_i) \\
&\geq 1 - \sum_{1 \leq i \leq k-1} \left(\frac{1}{n} - \alpha_j \cdot u_j(g_{ij})\right)\\
& \quad \quad \quad  -\sum_{k \leq i \leq r} \left(\frac{1}{n} - \alpha_j \cdot u_j(g_{ij}) + u_j(o_i)\right) \\
& \quad \quad \quad  - \sum_{r+1\le i\le j-1}\left(\frac{1}{n} - \alpha_j \cdot u_j(g_{ij})\right) \\
&\geq \frac{n-j+1}{n} + \sum_{1 \leq i \leq k-1 \atop or~r + 1 \leq i \leq k - 1} (\alpha_j \cdot u_j(g_{ij})) \\
& \quad \quad \quad  + \sum_{k \leq i \leq r} (\alpha_j \cdot u_j(g_{ij}) - u_j(o_i)) \\
&\geq \frac{n-j+1}{n} + \alpha_j \sum_{1 \leq i \leq k-1 \atop or~r + 1 \leq i \leq k - 1} \max_{k \leq p \leq r} u_j(o_p) \\
& \quad \quad \quad  + (\alpha_j - 1) \sum_{k \leq i \leq r} \max_{k \leq p \leq r} u_j(o_p) \\
&\geq (n-j+1)\left( \frac{1}{n} - \alpha_j \cdot u_j(g_{jj})\right),
\end{align*}
where the inequalities can be derived based on the same reason as in Part (b).
The proof of Lemma~\ref{lem:enough} is thus complete. 
\end{proof}

We now turn our attention to the proof of Theorem~\ref{thm:propalpha}.
\begin{proof}[Proof of Theorem~\ref{thm:propalpha}]
It is clear that all the goods are allocated after Step~\ref{propALG:lastAgent} of Algorithm~\ref{alg:propalpha}. We next consider the correctness of the algorithm. By Lemma~\ref{lem:enough}, we know that each iteration of the while-loop of Algorithm~\ref{alg:propalpha} is well-defined, which means that each agent in $\{1, \dots, n-1\}$ achieves her PROP$\alpha$, either in Steps~\ref{propALG-case1Begin} to~\ref{propALG-case1End} or in Steps~\ref{propALG-case2Begin} to~\ref{propALG-case2End}. For the last agent, Lemma~\ref{lem:enough} also implies that the PROP$\alpha$ of agent $n$ can be satisfied.

Since each step in this algorithm can be executed in polynomial time and the total number of the while loop at Step~\ref{propALG-iterBegin} is exactly $n-1$, we can conclude the polynomial running time of the algorithm, which completes the proof.
\end{proof}

\subsection{Impossibility Result}
From Theorem~\ref{thm:propalpha}, we know that a PROP$\alpha$ allocation always exists.
In the following, we give a lower bound on $f(\alpha)$ such that PROP$f(\alpha)$ allocation is not guaranteed to exist.

\begin{theorem} \label{thm:propa negative result}
    For any $\varepsilon > 0$, a PROP$(\frac{n-1}{n}-\varepsilon)\alpha$ allocation does not always exist.
\end{theorem}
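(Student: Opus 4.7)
The plan is to exhibit, for every $\varepsilon > 0$, an instance on which no allocation satisfies PROP$\bigl(\tfrac{n-1}{n} - \varepsilon\bigr)\alpha$. The design principle is to push the indivisibility ratio up to $1$ and create a pigeonhole bottleneck: use fewer indivisible goods than agents and place all of the value in the indivisible part, so that some agent is forced to receive nothing of value. Her PROP inequality will then directly pin down the required coefficient $f(1)$.

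Concretely, I would take $n$ identical agents, $n-1$ identical indivisible goods each of value $\tfrac{1}{n-1}$, and a cake with zero utility for every agent (so $u_i(M)+u_i(C)=1$ and $\alpha_i = 1$). Since $n-1 < n$, in every allocation some agent $i^*$ receives $M_{i^*} = \emptyset$, and the valueless cake gives $u_{i^*}(A_{i^*}) = 0$. Moreover every $o \in M \setminus M_{i^*} = M$ satisfies $u_{i^*}(o) = \tfrac{1}{n-1}$. The PROP$\bigl(\tfrac{n-1}{n} - \varepsilon\bigr)\alpha$ condition for $i^*$ would therefore demand
\[
0 + \left(\tfrac{n-1}{n} - \varepsilon\right) \cdot 1 \cdot \tfrac{1}{n-1} \;=\; \tfrac{1}{n} - \tfrac{\varepsilon}{n-1} \;\geq\; \tfrac{1}{n},
\]
which fails for any $\varepsilon > 0$. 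Hence no allocation on this instance is PROP$\bigl(\tfrac{n-1}{n} - \varepsilon\bigr)\alpha$, matching Theorem~\ref{thm:propalpha} up to the $\varepsilon$ slack.

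I do not anticipate any substantive obstacle. The two things worth double-checking are that the pigeonhole step really holds in every allocation (immediate, since $n-1 < n$) and that the identity of $i^*$ is irrelevant (immediate, since the agents are identical). If a strictly mixed-goods instance is preferred, the same argument still works with a small $\delta > 0$: let each indivisible good have value $\tfrac{1-\delta}{n-1}$ and let the cake carry total value $\delta$. Handing $i^*$ all of the cake is the most generous option for her, so the required left-hand side becomes $\delta + \bigl(\tfrac{n-1}{n} - \varepsilon\bigr)(1-\delta)\tfrac{1-\delta}{n-1}$, and taking $\delta$ small enough in terms of $\varepsilon$ and $n$ preserves the strict inequality.
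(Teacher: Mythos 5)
Your proposal is correct and is essentially the paper's own argument: the paper uses exactly your perturbed variant, with $n$ identical agents, $n-1$ indivisible goods each worth $\frac{1-x}{n-1}$, and a cake of value $x=\frac{\varepsilon}{n-1}$, then applies the same pigeonhole step to the agent receiving no indivisible good. Your zero-value-cake version is a legitimate (if degenerate) instance of the model, and your remark on taking $\delta$ small enough matches the paper's explicit choice of $x$ and its computation showing the left-hand side falls strictly below $\frac{1}{n}$.
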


\begin{proof}
    Without loss of generality, we assume $\varepsilon\le\frac25$, otherwise we can choose the corresponding instance for $\varepsilon=\frac25$ to prove this.
    Let $x = \frac{\varepsilon}{n-1}$. We consider the following instance with $n$ identical agents, $n - 1$ indivisible goods, and one cake.
    \begin{center}
    \begin{tabular}{cccc}
        \toprule
        & $|M| = n - 1$ & $C$ & $\alpha$ \\
        \midrule
        $u_i(\cdot), \forall i \in [n]$ & $\frac{1-x}{n-1},\forall o \in M$ & $x$ & $1-x$ \\
        \bottomrule
    \end{tabular}
    \end{center}
    There exists an agent $i$ who obtains no indivisible goods. Thus $u_i(A_i) \leq u_i(C) = x$. For this agent and any good $g \in M$, we have
    \begin{eqnarray*}
        && u_i(A_i) + (\frac{n-1}{n}-\varepsilon)\alpha_i \cdot u_i(g) \\
        &\leq& x + (\frac{1}{n} -x) \cdot (1-x)^2 \\
        &=& \frac{1}{n} - x(\frac{2}{n}-\frac{2n+1}{n}x + x^2) \\
        &=&\frac1n-x(\frac2n-\frac2n\frac{n+1/2}{n-1}\varepsilon+x^2)\\
        &<& \frac{1}{n}.
    \end{eqnarray*}
    The last inequality is because $\frac{(n+\frac12)\varepsilon}{n-1}\le 1$ from $n\ge 2$ and $\varepsilon\le \frac25$.
    Therefore, agent $i$ fails to achieve PROP$(\frac{n-1}{n}-\varepsilon)\alpha$.
\end{proof}

This impossibility result with Theorem~\ref{thm:propalpha} also implies that we obtain an asymptotically tight characterization for the existence of PROP$f(\alpha)$.

\section{Compatibility of PROP$\alpha$ and PO}\label{sec:propa+po}
In this section, we show the compatibility of PROP$\alpha$ and PO. In particular, we prove that any allocation that maximizes Nash welfare (which is trivially PO) must be PROP$\alpha$, and the implication regarding parameter $\alpha$ is tight. 
It is important to note that our approach is different from the conventional ones in \citep{CaragiannisKuMo19,KawaseNiSu23} since PROP$\alpha$ is not defined using pairwise comparisons between agents. 
PROP$\alpha$ requires an agent to compare her own bundle and the {\em union} of the goods allocated to all the other agents, and moreover, such a comparison, as relaxed by the indivisibility ratio times the value of the largest item outside of her bundle, is sensitive to the overall allocation. 
To overcome this complexity, we later present a monotone property of Nash welfare maximizing allocations (see \cref{cor:partition covers mnw move condition}) to quantify the effect of the reallocation of (fractional) goods.

\begin{theorem} \label{MNW is propa}
    Any MNW allocation satisfies \propa.
\end{theorem}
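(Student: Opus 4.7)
I would argue by contradiction: suppose an MNW allocation $\mathcal{A} = (A_1,\dots,A_n)$ fails PROP$\alpha$ for some agent $i$, i.e. $u_i(A_i) + \alpha_i \cdot u_i(o) < 1/n$ for every indivisible good $o \in M \setminus M_i$. Let $o^* \in M\setminus M_i$ maximize $u_i$ and set $\delta := 1/n - u_i(A_i) > 0$. First I dispose of two edge cases. If $\alpha_i = 0$, then $i$ values only the cake, and an MNW allocation restricted to cake is EF (hence PROP) for $i$. If $u_i(A_i) = 0$, then since $u_i(M\cup C) = 1$ some bundle has positive $u_i$-value, and shaving a sliver to $i$ would strictly increase the number of agents with positive utility, contradicting the lexicographic tiebreaking in the MNW definition. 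Henceforth assume $u_i(A_i) > 0$ and $\alpha_i > 0$.

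Next, from MNW stationarity I would extract two families of local-exchange constraints. For each $j \neq i$ and each cake point $x \in C_j$ with $u_j(x) > 0$, transferring an infinitesimal slice at $x$ from $j$ to $i$ cannot increase the Nash product, giving $u_i(x)/u_i(A_i) \leq u_j(x)/u_j(A_j)$; integrating yields $u_i(C_j) \leq (u_i(A_i)/u_j(A_j))\cdot u_j(C_j)$. For each indivisible good $o \in M_j$ the standard swap inequality $(u_i(A_i)+u_i(o))(u_j(A_j)-u_j(o)) \leq u_i(A_i)\, u_j(A_j)$ must hold, and it can be sharpened by simultaneously compensating the swap with a small cake transfer so that neither side drops to $0$. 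These two families together describe, at infinitesimal and atomic levels respectively, how value can flow from $j$ to $i$ under MNW.

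The core of the proof is to aggregate these pointwise constraints into a single global bound on $\sum_{j \neq i} u_i(A_j) = 1 - u_i(A_i)$. This is where I would invoke the monotone property of MNW allocations previewed in the introduction: it should let me partition the mixed bundles $A_j$ into pieces, each of whose $u_i$-value is controlled by $u_i(A_i)$ plus at most one $\alpha_i$-scaled $u_i(o^*)$ slack term. Summing over the partition should give an inequality of the form
\[
1 - u_i(A_i) \;\leq\; (n-1)\bigl(u_i(A_i) + \alpha_i \cdot u_i(o^*)\bigr),
\]
which combined with the contradiction hypothesis $u_i(A_i) + \alpha_i u_i(o^*) < 1/n$ gives $1 - u_i(A_i) < (n-1)/n$, i.e., $u_i(A_i) > 1/n$, contradicting $\delta > 0$.

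The main obstacle is precisely this aggregation. Unlike the classical Caragiannis et al.\ EF1+PO argument, PROP$\alpha$ is not a pairwise statement: I must bound $u_i$ on the union of all other bundles rather than on a single $A_j$, and the $\alpha_i$-relaxation couples indivisible swaps with cake transfers. A naive bundle-by-bundle summation of the two local inequalities loses roughly a factor of $n$ in the slack, so obtaining the tight $(n-1)$-coefficient above requires the monotone property to globally coordinate the perturbations, rather than just summing them. This global coordination, together with the careful handling of hybrid (indivisible-for-cake) swaps that keep every Nash factor positive, is where the analysis genuinely departs from the standard MNW template.
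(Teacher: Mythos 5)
Your overall architecture --- argue by contradiction, extract local exchange inequalities from the MNW condition, then aggregate --- matches the paper's in spirit, and your two families of local constraints are essentially the paper's per-piece inequality $\frac{u_i(p)}{u_i(A_i)+u_i(p)} \leq \frac{u_j(p)}{u_j(A_j)}$, which the paper applies to a partition of $A_j$ into single indivisible goods and arbitrarily fine cake pieces to get $\frac{u_i(C_j)}{u_i(A_i)} + \sum_{g\in M_j}\frac{u_i(g)}{u_i(A_i)+u_i(g)} \leq 1$. However, the step you defer to the ``monotone property'' is exactly where the plan breaks, and it breaks for a provable reason: the aggregate bound you aim for,
\[
1 - u_i(A_i) \;\leq\; (n-1)\bigl(u_i(A_i) + \alpha_i\, u_i(o^*)\bigr),
\]
is algebraically equivalent to $u_i(A_i) + \tfrac{n-1}{n}\alpha_i\, u_i(o^*) \geq \tfrac{1}{n}$, i.e., to PROP$\frac{n-1}{n}\alpha$. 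The paper's tightness result for this theorem shows that for every $\varepsilon>0$ some MNW allocation fails PROP$(1-\varepsilon)\alpha$; taking $\varepsilon = 1/n$ refutes your target. Concretely, in the three-agent instance with $A_1=C_1$ worth $1$ to agent $1$ and two bundles of $\frac{1+x}{x}$ goods each worth $x$, one computes (after normalizing) $1-u_1(A_1) = \frac{6+10x+4x^2}{(3+2x)^2}$ while $2\bigl(u_1(A_1)+\alpha_1 u_1(o^*)\bigr) = \frac{6+8x+4x^2}{(3+2x)^2}$, so your inequality fails for every $x>0$. Since you never use the contradiction hypothesis in deriving the aggregate bound, no amount of ``global coordination'' can rescue a linear sum of $n-1$ pairwise slacks of the form $\alpha_i u_i(o^*)$.

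The repair --- which is what the paper actually does --- is to keep the slack in its nonlinear form. Writing $w_i=\max_{g\notin M_i}u_i(g)$, the per-bundle constraint gives $u_i(A_j) \leq u_i(A_i) + \frac{w_i}{u_i(A_i)+w_i}\,u_i(M_j)$; summing over \emph{all} $j$ (including $j=i$) yields $\frac{u_i(C)}{u_i(A_i)} \leq n - \frac{u_i(M)}{u_i(A_i)+w_i}$ and $\frac{u_i(M)}{u_i(A_i)+w_i}\leq n$, and the conclusion $u_i(A_i)+\frac{u_i(M)}{u_i(M)+u_i(C)}\,w_i\geq\frac1n$ then follows from a substitution chain that uses the first bound twice, once inside the denominator of $\alpha_i$ and once at the end. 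The factor $\alpha_i$ only emerges after this nonlinear recombination; the per-bundle denominator is $u_i(A_i)+w_i$, not $u_i(A)$. Two smaller issues: your dismissal of $u_i(A_i)=0$ is too quick, since MNW allocations can leave agents at zero utility when not everyone can be made positive --- the paper instead shows such an agent has $u_i(C)=0$ and positively values at most $n$ goods, so some $o\notin M_i$ has $u_i(o)\geq 1/n$ and PROP$\alpha$ holds directly with $\alpha_i=1$; and the paper also needs a reduction (its ``EF reduce'' lemma) to the case where $i$ envies everyone, which your plan omits.
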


The proof of \cref{MNW is propa} relies on the following nice properties. The first one states that we can remove the goods that yield zero value to all agents without loss of generality.

\begin{observation}
    In an MNW allocation, if there exists a subset $p \subseteq A_i$ such that $u_i(p) = 0$, then for every $j \in [n]$ we have $u_j(p) = 0$. Therefore, we can simply remove $p$ from $A$. Formally, we assume with out loss of generality that 
    \begin{equation} \label{no zero utility item}
        \forall p \subseteq A_i, u_i(p) > 0.
    \end{equation}
\end{observation}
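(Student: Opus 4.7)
The plan is to argue by contradiction: suppose there exists a subset $p \subseteq A_i$ with $u_i(p) = 0$ but some other agent $j \neq i$ has $u_j(p) > 0$, and construct a strictly better allocation by transferring $p$ from agent $i$ to agent $j$. Formally, I would define $\mathcal{A}'$ by setting $A_i' = A_i \setminus p$, $A_j' = A_j \cup p$, and $A_\ell' = A_\ell$ for all $\ell \notin \{i, j\}$. (If $p$ intersects the cake portion of $A_i$, this is still a valid allocation since the cake is partitioned by measurable sets.)

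Next, I would compute the utilities under $\mathcal{A}'$ using additivity of $u_i$. For agent $i$, $u_i(A_i') = u_i(A_i) - u_i(p) = u_i(A_i)$, so $i$'s utility is unchanged and in particular its sign is preserved. For agent $j$, $u_j(A_j') = u_j(A_j) + u_j(p) > u_j(A_j)$, so $j$'s utility strictly increases. All other agents are unaffected.

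Then I would compare $\mathcal{A}'$ to $\mathcal{A}$ under the lexicographic MNW objective, splitting into two cases depending on whether $u_j(A_j) > 0$ or $u_j(A_j) = 0$. In the first case, the set of agents with positive utility is identical under $\mathcal{A}$ and $\mathcal{A}'$, but $\prod_{\ell : u_\ell(A_\ell') > 0} u_\ell(A_\ell') > \prod_{\ell : u_\ell(A_\ell) > 0} u_\ell(A_\ell)$, contradicting the second-level maximality of $\mathcal{A}$. In the second case, agent $j$ moves from zero to strictly positive utility while no agent drops out, so $\mathcal{A}'$ strictly increases the number of positive-utility agents, contradicting the first-level maximality of $\mathcal{A}$. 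Either way we reach a contradiction, so $u_j(p) = 0$ for all $j$. Finally, since no agent assigns any value to $p$, removing $p$ from $A$ leaves every utility $u_\ell(A_\ell)$ unchanged, so the assumption in \eqref{no zero utility item} is indeed without loss of generality.

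The only mild subtlety I foresee is handling cake pieces rather than single indivisible goods in the transfer: I must ensure that $A_j \cup p$ is still a countable union of intervals and disjoint from the other $C_\ell$, which is automatic because $p \subseteq A_i$. Beyond that, the argument is a direct swap-and-compare, using only additivity of valuations and the two-tier definition of MNW.
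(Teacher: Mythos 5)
Your argument is correct; the paper states this observation without an explicit proof, and your swap-and-compare reasoning (transfer the piece $p$ with $u_i(p)=0$ but $u_j(p)>0$ to agent $j$, then contradict either the first-tier count of positive-utility agents or the second-tier Nash product, depending on whether $u_j(A_j)>0$) is exactly the intended justification, with the removal step then being harmless since no agent's utility, normalization, or indivisibility ratio changes. No gaps beyond the measure-theoretic formality of cake pieces, which you already flag and which the paper itself glosses over.
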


Base on the above assumption, we further show the following reduction that since any agent with zero utility in an MNW allocation must have at most $n$ positive-valued goods and no positive-valued cake, the agent achieves PROP$\alpha$ and we can focus only on other agents with positive utility.

\begin{lemma} \label{non-zero reduce}
    Suppose \cref{no zero utility item} holds. If \cref{MNW is propa} holds for every instances admitting an MNW allocation with no agent with zero utility, then \cref{MNW is propa} holds for every instances.
\end{lemma}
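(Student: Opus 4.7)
The plan is to partition the agents of the original instance into $N^+ = \{i \in N : u_i(A_i) > 0\}$ and $N^0 = \{i \in N : u_i(A_i) = 0\}$, handle each set differently, and invoke the hypothesis on a sub-instance built on $N^+$. For the $N^+$ agents, I will first exploit that \cref{no zero utility item} forces every $A_i$ with $i \in N^0$ to be empty (take $p = A_i$), so no goods need to be relocated when passing to the sub-instance with agent set $N^+$, the same indivisible goods $M$, and the same cake $C$. I would then argue that $\mathcal{A}|_{N^+}$ is an MNW allocation of this sub-instance: any strictly better allocation of the sub-instance can be extended back to the original by giving empty bundles to $N^0$, producing an allocation with the same set of positive-utility agents but a strictly larger Nash product, contradicting MNW of $\mathcal{A}$. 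Because $u_i$ and $\alpha_i = u_i(M)/(u_i(M)+u_i(C))$ are unchanged for every $i \in N^+$, the hypothesis yields, for each such $i$, some $o \in M \setminus M_i$ with $u_i(A_i) + \alpha_i \cdot u_i(o) \geq 1/|N^+| \geq 1/n$, which is PROP$\alpha$ in the original instance.

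For each $i \in N^0$, I would verify PROP$\alpha$ directly via two structural claims about MNW allocations. First, $u_i(C) = 0$, so $\alpha_i = u_i(M) = 1$: if $i$ valued some cake positively, the offending mass would lie inside some $C_j$ with $j \in N^+$ (since $N^0$ bundles are empty), and sliding a sufficiently small sub-piece of $C_j$ on which $u_i$ has positive mass to agent $i$ would make $i$ positive while keeping $j$ positive, contradicting the first-priority maximization of the positive-utility count in MNW. Second, each indivisible $g$ with $u_i(g) > 0$ must satisfy $A_j = \{g\}$, where $j$ is its current owner: otherwise $u_j(A_j \setminus \{g\}) > 0$ by \cref{no zero utility item}, and moving $g$ to $i$ would again strictly increase the positive-utility count. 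Consequently $S_i := \{g \in M : u_i(g) > 0\}$ sits in distinct singleton bundles owned by agents other than $i$, so $|S_i| \leq n - 1$. Since $\sum_{g \in S_i} u_i(g) = u_i(M) = 1$, pigeonhole yields some $o \in S_i$ with $u_i(o) \geq 1/(n-1) \geq 1/n$, and then $u_i(A_i) + \alpha_i \cdot u_i(o) = u_i(o) \geq 1/n$, giving PROP$\alpha$ for $i$.

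The main obstacle will be formalizing the ``infinitesimal cake move'' used in the $N^0$ analysis. To produce a valid Pareto-safe reallocation I need to exhibit a sub-piece $p \subseteq C_j$ with $u_i(p) > 0$ and $u_j(p) < u_j(A_j)$; this is enabled by $u_j(C_j) > 0$ (itself forced by \cref{no zero utility item}) together with continuity of the integrals $u_i(\cdot)$ and $u_j(\cdot)$ over nested sub-pieces of $C_j$, allowing arbitrarily small slices on which $u_i$ still has positive mass. A secondary, simpler check is that the Nash-welfare-lifting argument for the sub-instance does not hit edge cases when some $M_i$ or $C_i$ is empty --- the sub-instance inherits all allocation-validity conditions from the original after dropping the $N^0$ agents whose bundles are already empty, and since the removed goods have value zero to every agent, neither $u_i(M)$ nor $u_i(C)$ shifts, keeping the threshold $1/|N^+| \geq 1/n$ and the ratio $\alpha_i$ intact.
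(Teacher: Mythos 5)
Your proposal is correct and takes essentially the same approach as the paper's proof: restrict attention to the positive-utility agents (whose restricted allocation remains MNW for the sub-instance, so the hypothesis applies with threshold $1/|N^+|\ge 1/n$) and, for each zero-utility agent, argue that she values no cake and only few indivisible goods positively, so that $\alpha_i=1$ and pigeonhole yield a single good worth at least $1/n$. Your singleton-bundle observation (giving $|S_i|\le n-1$ rather than the paper's bound of $k$) and your explicit handling of the small cake transfer are just more detailed renditions of the paper's counting and cake-reallocation arguments.
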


\begin{proof}
    We assume there exists such a $k$ that $u_i(A_i) > 0$ for $i \in [k]$ and $u_i(A_i) = 0$ for $i \in [n] \backslash [k]$. From \cref{no zero utility item} we know agents in $[n] \backslash [k]$ are allocated with no goods. Since MNW maximizes the product of utilities among agents in $[k]$, these agents achieve \propa~by the assumption of the lemma. 
    
    For a fixed agent $i \in [n] \backslash [k]$, we claim that $u_i(C) = 0$ because otherwise we can allocate some pieces of cake to agent $i$ to increase the number of agents with positive utility. Further, there are no more than $k$ goods in $M$ such that agent $i$ has positive utility on them. Otherwise, there is some agent in $[k]$ who has more than two such goods. By allocating one of them to agent $i$, the number of agents with positive utility increases, which leads to a contradiction. Therefore, since agent $i$ has positive utility on at most $k$ goods, there exists some goods $o \in M \backslash M_i$ such that $u_i(o) \geq \frac{u_i(A)}{n}$. Together with $\alpha_i = 1$, agent $i$ achieves \propa.
\end{proof}

\cref{non-zero reduce} allows us to assume that
\begin{equation} \label{wlog no zero agent}
    \forall i \in [n], u_i (A_i) > 0
\end{equation}
in the following proofs. In the next lemma, we show that if an agent $i$ does not envy some other agent, then the PROP$\alpha$ criterion can be reduced to the case when agent $i$ envies all other agents.

\begin{lemma} \label{EF reduce}
    Suppose \cref{no zero utility item} and \cref{wlog no zero agent} hold. If \cref{MNW is propa} holds for any agent that envies all other agents, \cref{MNW is propa} holds for every agents.
\end{lemma}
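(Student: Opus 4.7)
I plan to prove the lemma by a peeling reduction: iteratively discard agents that $i$ does not envy until $i$ envies every remaining agent, then apply the hypothesis to the reduced instance. Concretely, suppose $i$ does not envy everyone and pick some $j \neq i$ with $u_i(A_j) \leq u_i(A_i)$. Form the sub-instance $I'$ on agents $[n]\setminus\{j\}$ and goods $A\setminus A_j$, keeping the restriction $\mathcal{A}'$ of $\mathcal{A}$. The first thing I would verify is the standard lifting argument that $\mathcal{A}'$ is MNW of $I'$: any strictly Nash-welfare-improving allocation in $I'$ could be combined with the unchanged bundle $A_j$ to give a strictly better allocation than $\mathcal{A}$ in the original instance, contradicting MNW. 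Iterating this removal peels off all non-envied agents; in the final sub-instance on $\{i\}\cup T$ with $T=\{k:u_i(A_k)>u_i(A_i)\}$, agent $i$ envies every other remaining agent by construction, so the hypothesis of the lemma delivers PROP$\alpha$ for $i$ in that sub-instance.

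Next I need to translate the sub-instance PROP$\alpha$ back to the original instance. I would observe two quick facts. First, WLOG $u_i(A_i)<1/n$, since otherwise PROP$\alpha$ holds trivially. Second, for every removed $j\in S:=[n]\setminus(T\cup\{i\})$ we have $u_i(A_j)\le u_i(A_i)<1/n$, so
\begin{equation*}
u_i\Bigl(A_i\cup\bigcup_{k\in T} A_k\Bigr)=1-\sum_{j\in S} u_i(A_j) > 1-\tfrac{|S|}{n}=\tfrac{|T|+1}{n},
\end{equation*}
which shows the sub-instance's PROP$\alpha$ target $\frac{u_i(A_i\cup\bigcup_{k\in T} A_k)}{|T|+1}$ is at least $1/n$. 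Moreover, any indivisible good $o^\ast$ produced by the sub-instance's PROP$\alpha$ certificate lies in $\bigcup_{k\in T} M_k\subseteq M\setminus M_i$, so it remains a legitimate witness in the original instance.

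The main obstacle, and the step I expect to require the most care, is that the indivisibility ratio $\alpha_i'=u_i(\hat M)/u_i(\hat A)$ in the sub-instance can differ from the original $\alpha_i=u_i(M)$: removing agents whose bundles are more cake-heavy than average actually \emph{inflates} $\alpha_i'$, so the na\"ive inequality $\alpha_i'u_i(o^\ast)\le \alpha_i u_i(o^\ast)$ does not hold unconditionally. I would address this by being selective about which non-envied agent to peel off at each step (prefer $j$ with $u_i(M_j)/u_i(A_j)\geq\alpha_i$, so that $\alpha_i'$ does not increase), and by absorbing the remaining slack into the strict inequality $\frac{u_i(\hat A)}{n-|S|}>\frac{1}{n}$ obtained above. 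In particular, combining the sub-instance bound $u_i(A_i)+\alpha_i'\,u_i(o^\ast)\ge \frac{u_i(\hat A)}{n-|S|}$ with the strict gap coming from $u_i(A_i)<1/n$ should yield enough room to conclude $u_i(A_i)+\alpha_i\,u_i(o^\ast)\ge 1/n$ directly. Once this translation is in place, the induction is complete and every agent achieves PROP$\alpha$ whenever the hypothesized envy-all case is handled, so we may assume in the proof of Theorem~\ref{MNW is propa} that agent $i$ envies all other agents.
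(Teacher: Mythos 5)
Your reduction is the same one the paper uses: restrict attention to agent $i$ together with the set $T$ of agents she envies, observe that the restricted allocation is still MNW for the sub-instance (your lifting argument is correct, and is a point the paper leaves implicit), invoke the hypothesis there, and translate back using $u_i(\hat A)\ge \frac{|T|+1}{n}u_i(A)$ together with the fact that the witness $o^\ast$ lies in $M\setminus M_i$. You have also correctly isolated the one nontrivial point, namely that the sub-instance ratio $\alpha_i'=u_i(\hat M)/u_i(\hat A)$ may exceed $\alpha_i$. However, the two devices you offer for closing this are not on equal footing. The selective peeling is a dead end: among the non-envied agents there may be no $j$ with $u_i(M_j)/u_i(A_j)\ge\alpha_i$ (for instance, all of them may hold only cake), in which case every removal order inflates the ratio and you cannot keep $\alpha_i'\le\alpha_i$. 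And ``absorbing the remaining slack'' is asserted rather than carried out; an additive comparison of the two proportional targets is not the right way to see it, so as written the crux of the lemma is still open.

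The step does close, but multiplicatively, and this computation is essentially the whole content of the lemma. Write $k=|T|+1$ and $\hat A=A_i\cup\bigcup_{j\in T}A_j$. Your bound $u_i(\hat A)\ge \frac{k}{n}u_i(A)$ (which uses the harmless assumption $u_i(A_i)\le u_i(A)/n$) is equivalent to $k/u_i(\hat A)\le n/u_i(A)$, hence
\begin{equation*}
k\,\alpha_i'=\frac{k\,u_i(\hat M)}{u_i(\hat A)}\le \frac{n\,u_i(\hat M)}{u_i(A)}\le \frac{n\,u_i(M)}{u_i(A)}=n\,\alpha_i .
\end{equation*}
Multiplying the sub-instance guarantee $u_i(A_i)+\alpha_i'u_i(o^\ast)\ge u_i(\hat A)/k$ by $k$ and using $u_i(\hat A)\ge u_i(A)-(n-k)u_i(A_i)$ gives $k\,u_i(A_i)+k\alpha_i'\,u_i(o^\ast)\ge u_i(A)-(n-k)u_i(A_i)$, and substituting $k\alpha_i'\le n\alpha_i$ yields $n\,u_i(A_i)+n\alpha_i\,u_i(o^\ast)\ge u_i(A)$, i.e.\ PROP$\alpha$ for agent $i$ in the original instance. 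This is exactly how the paper completes the argument; with this inserted your proposal is correct, and the selective-peeling detour should be dropped.
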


\begin{proof}
    We assume there exists such a $k$ that $u_1(A_1) < u_1(A_i)$ for $i \in [k] \backslash \{1\}$ and $u_1(A_1) \geq u_1(A_i)$ for $i \in [n] \backslash [k]$. 
    Since agent $1$ envies every agent in $[k] \backslash \{1\}$, we consider the instance containing only the first $k$ agents and the goods in the first $k$ bundles, we can then assume that there exists some $o \in \bigcup_{i=2}^{k} M_i$ such that
    \begin{equation} \label{EF reduce hypothesis}
        u_1(A_1) + \frac{\sum_{i=1}^{k}u_1 (M_i)}{\sum_{i=1}^{k} u_1 (A_i)} \cdot u_1(o) \geq \frac{\sum_{i=1}^{k} u_1 (A_i)}{k}.
    \end{equation}
    This inequality trivially holds when $k=1$.
    We then need to use \cref{EF reduce hypothesis} to prove that
    \begin{equation} \label{EF reduce result}
        \exists~o \in \bigcup_{i=2}^{n} M_i,~u_1(A_1) + \frac{u_1 (M)}{u_1 (A)} \cdot u_1(o) \geq \frac{u_1 (A)}{n}.
    \end{equation}

    Denote one feasible $o \in \bigcup_{i=2}^{k} M_i$ in \cref{EF reduce hypothesis} by $o^*$. Note that $o^* \in \bigcup_{i=2}^{n} M_i$. 
    Recall that $u_1(A_1) < u_1(A_i)$ for $i \in [k] \backslash \{1\}$ and $u_1(A_1) \geq u_1(A_i)$ for $i \in [n] \backslash [k]$, we have
    \begin{eqnarray*}
        && \frac{\sum_{i=k+1}^{n} u_1 (A_i)}{\sum_{i=1}^{k} u_1 (A_i)} \leq \frac{(n-k) u_1(A_1)}{k u_1(A_1)} = \frac{n-k}{k} \\
        &\Leftrightarrow& \frac{u_1 (A)}{\sum_{i=1}^{k} u_1 (A_i)} \leq \frac{n}{k} \\
        &\Leftrightarrow& \frac{k}{\sum_{i=1}^{k} u_1 (A_i)} \leq \frac{n}{u_1 (A)}.
    \end{eqnarray*}
    Then, we deduce from \cref{EF reduce hypothesis} that
    \begin{eqnarray*}
        \sum_{i=1}^{k} u_1 (A_i)
        &\leq& k u_1(A_1) + \frac{k \sum_{i=1}^{k}u_1 (M_i)}{\sum_{i=1}^{k} u_1 (A_i)} \cdot u_1(o^*) \\
        &\leq& k u_1(A_1) + \frac{n \sum_{i=1}^{k}u_1 (M_i)}{u_1 (A)} \cdot u_1(o^*) \\
        &\leq& k u_1(A_1) + \frac{n u_1 (M)}{u_1 (A)} \cdot u_1(o^*).
    \end{eqnarray*}
    Therefore, we have
    \begin{eqnarray*}
        u_1(A)
        &=&\sum_{i=1}^{n} u_1 (A_i) \\
        &\leq& k u_1(A_1) + \frac{n u_1 (M)}{u_1 (A)} \cdot u_1(o^*) + \sum_{i=k+1}^{n} u_1 (A_i) \\
        &\leq& k u_1(A_1) + \frac{n u_1 (M)}{u_1 (A)} \cdot u_1(o^*) + (n-k) u_1(A_1) \\
        &=& n u_1(A_1) + \frac{n u_1 (M)}{u_1 (A)} \cdot u_1(o^*).
    \end{eqnarray*}
    which directly implies \cref{EF reduce result}.
\end{proof}

In the following proofs, with \cref{EF reduce}, we suppose without loss of generality that 
\begin{equation} \label{wlog EF1}
    \forall j \neq i, u_i (A_i) < u_i (A_j).
\end{equation}

With above three assumptions, we turn to characterize the constraints of MNW allocations on the utility function of a specific agent. We use the property that if we move a set $S$ of goods from one agent to another agent under an MNW allocation, the product of their utilities must not increase. Formally, given agent $i, j$, for any $S\subset A_i$ we have
\begin{equation*}
    f_{ij}(S) := u_i(A_i\setminus S)u_j(A_j \cup S) - u_i(A_i)u_j(A_j) \leq 0
\end{equation*}
Further, we have the following observation.
\begin{observation}
    Given agent $i, j$, if there exists $S \subseteq A_i$ such that $f_{ij}(S) > 0$ and $S$ can be partitioned into two non-empty set $S_1$ and $S_2$, then either $f_{ij}(S_1) > 0$ or $f_{ij}(S_2) > 0$.
\end{observation}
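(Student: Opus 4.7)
The plan is to prove the contrapositive: assuming $f_{ij}(S_1) \leq 0$ and $f_{ij}(S_2) \leq 0$, I will show $f_{ij}(S) \leq 0$, which contradicts the hypothesis that $f_{ij}(S) > 0$. The main tool is the additivity of both $u_i$ and $u_j$ over $A$ (stated in the Preliminaries), which, together with $S = S_1 \sqcup S_2$, yields $u_i(S) = u_i(S_1) + u_i(S_2)$ and $u_j(S) = u_j(S_1) + u_j(S_2)$. Nothing about MNW or about the particular structure of the allocation $\mathcal{A}$ is invoked; the observation is a purely algebraic property of the function $f_{ij}$.

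Introducing shorthand $s_\ell = u_i(S_\ell) \geq 0$ and $t_\ell = u_j(S_\ell) \geq 0$ for $\ell \in \{1,2\}$, and expanding the definition $f_{ij}(T) = u_i(A_i \setminus T)\, u_j(A_j \cup T) - u_i(A_i)\, u_j(A_j)$ on each of $S_1$, $S_2$, and $S$, a short algebraic manipulation would give the key identity
\[
f_{ij}(S) \;=\; f_{ij}(S_1) + f_{ij}(S_2) \;-\; s_1 t_2 \;-\; s_2 t_1.
\]
Since $s_1, s_2, t_1, t_2 \geq 0$, the two cross terms are nonnegative, so $f_{ij}(S) \leq f_{ij}(S_1) + f_{ij}(S_2)$. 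If both summands on the right were nonpositive, $f_{ij}(S)$ would be nonpositive too, closing the contrapositive. Intuitively, moving $S$ in one shot is worse for the Nash product than moving $S_1$ and $S_2$ separately (one undercounts the impact on agent $j$'s bundle), so a strictly positive joint move forces at least one strictly positive half.

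The hard part is essentially non-existent; the only subtlety to check is that additivity is applied correctly when $S_1$ and $S_2$ contain both indivisible goods from $M$ and cake intervals from $C$, but this is immediate from the integrability assumption on $u_i$, $u_j$ over $C$ combined with additivity over $M$. Since the identity above is symmetric in $S_1$ and $S_2$, the observation follows regardless of how $S$ is split.
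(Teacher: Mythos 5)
Your proposal is correct. The paper states this observation without proof, and your argument supplies exactly the verification it implicitly relies on: writing $a=u_i(A_i)$, $b=u_j(A_j)$, $s_\ell=u_i(S_\ell)$, $t_\ell=u_j(S_\ell)$, one checks $f_{ij}(T)=a\,u_j(T)-u_i(T)\,b-u_i(T)\,u_j(T)$ and hence $f_{ij}(S)=f_{ij}(S_1)+f_{ij}(S_2)-s_1t_2-s_2t_1\le f_{ij}(S_1)+f_{ij}(S_2)$ by non-negativity of utilities, so the contrapositive closes. Your remark that additivity applies to mixed subsets of $M$ and $C$ is also the right (and only) point needing care, and it is covered by the paper's additivity assumption on $M'\cup C'$.
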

This observation directly implies the following corollary.
\begin{corollary}\label{cor:partition covers mnw move condition}
    Given agent $i, j$, for some $S \subseteq M$, if for any $g \in S$ we have $f_{ij}(\{g\}) \leq 0$, then for any subset $T\subseteq S$ we have $f_{ij}(\{g\}) \leq 0$.
\end{corollary}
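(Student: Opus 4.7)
The plan is to establish the corollary by a direct induction on $|T|$, with the preceding observation serving as the essential structural tool. Note first that the conclusion as stated should read $f_{ij}(T) \leq 0$ rather than $f_{ij}(\{g\}) \leq 0$; the latter is evidently a copy-paste typo from the hypothesis. Since $S \subseteq M$ consists of atomic indivisible goods and $f_{ij}$ is only meaningful for sets contained in $A_i$, we are implicitly working with $S \subseteq M_i$, so every $T \subseteq S$ is a finite set of indivisible goods that admits a splitting into two non-empty pieces whenever $|T| \geq 2$. The empty case $T = \emptyset$ gives $f_{ij}(\emptyset) = 0$ trivially and needs no further discussion.

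For the base case $|T| = 1$, say $T = \{g\}$ with $g \in S$, the conclusion is exactly the hypothesis. For the inductive step with $|T| \geq 2$, I pick any partition $T = T_1 \cup T_2$ into non-empty subsets (concretely, peel off one element to form $T_1$ and set $T_2 = T \setminus T_1$). Suppose for contradiction that $f_{ij}(T) > 0$. Then the observation applied to this partition forces $f_{ij}(T_1) > 0$ or $f_{ij}(T_2) > 0$. But $T_1, T_2 \subseteq S$ with $|T_\ell| < |T|$, so the inductive hypothesis gives $f_{ij}(T_\ell) \leq 0$ for both $\ell \in \{1,2\}$, a contradiction. Hence $f_{ij}(T) \leq 0$, which closes the induction. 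The only substantive content of the corollary is the observation itself; the inductive unfolding is routine and there is no genuine obstacle. The one piece of bookkeeping worth flagging is that each $T_\ell$ in the recursion continues to satisfy $T_\ell \subseteq S \subseteq M_i$, so $f_{ij}(T_\ell)$ remains well-defined throughout the argument.
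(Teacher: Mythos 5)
Your proof is correct and matches the paper's intent exactly: the paper offers no explicit argument beyond stating that the observation ``directly implies'' the corollary, and your induction on $|T|$ (splitting off one element and applying the observation contrapositively) is precisely the routine unfolding being elided. Your side remarks --- that the conclusion should read $f_{ij}(T)\leq 0$ and that $S$ must implicitly lie in $M_i$ for $f_{ij}$ to be defined --- are both accurate readings of typos/implicit assumptions in the statement.
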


We extend the above idea to divisible goods to derive a condition on some agent's utility function, the following lemma. 

\begin{lemma} \label{MNW condition on one agent}
    Suppose \cref{no zero utility item}, \cref{wlog no zero agent}, and \cref{wlog EF1} hold. 
    In an MNW allocation, for any two (possibly same) agents $i$ and $j$, 
    \begin{equation} \label{MNW condition on one agent equation}
        \frac{u_i(C_j)}{u_i(A_i)} + \sum_{g \in M_j} \frac{u_i(g)}{u_i(A_i) + u_i(g)} \leq 1.
    \end{equation}
\end{lemma}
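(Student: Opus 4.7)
The plan is to use the core MNW no-improvement property applied piece-by-piece to agent $j$'s bundle: for any $S \subseteq A_j$, reallocating $S$ from $j$ to $i$ cannot increase the product of utilities, i.e., $(u_i(A_i)+u_i(S))(u_j(A_j)-u_j(S)) \le u_i(A_i)u_j(A_j)$. I will derive two ``pointwise'' inequalities
\[
    \frac{u_i(g)}{u_i(A_i) + u_i(g)} \le \frac{u_j(g)}{u_j(A_j)} \quad \forall g \in M_j,
    \qquad
    \frac{u_i(C_j)}{u_i(A_i)} \le \frac{u_j(C_j)}{u_j(A_j)},
\]
and then sum them; since $u_j(M_j) + u_j(C_j) = u_j(A_j)$, the right-hand side becomes exactly $1$. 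The diagonal case $i = j$ needs no MNW input: the left-hand side of \cref{MNW condition on one agent equation} is then at most $\frac{u_i(C_i)}{u_i(A_i)} + \sum_{g \in M_i}\frac{u_i(g)}{u_i(A_i)} = 1$, since adding $u_i(g)$ to each denominator only decreases the fraction.

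For the indivisible-good inequality (with $i \neq j$), I will apply the MNW condition to the move of the single good $\{g\}$ from $j$ to $i$ and rearrange. The derivation is valid because $u_j(g)$ and $u_j(A_j)$ are strictly positive by \cref{no zero utility item} and \cref{wlog no zero agent}. Summing over $g \in M_j$ then controls the discrete contribution by $u_j(M_j)/u_j(A_j)$.

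For the cake, a single bulk move of all of $C_j$ only yields the loose bound $\frac{u_i(C_j)}{u_i(A_i)} \le \frac{u_j(C_j)}{u_j(M_j)}$, whose denominator is $u_j(M_j)$ rather than $u_j(A_j)$; plugging this into the sum would leave the total strictly above $1$ in general. The remedy is to discretize: using the non-atomicity of the cake measure, partition $C_j$ into $N$ pieces $S_1, \dots, S_N$ with $u_j(S_k) = u_j(C_j)/N$, apply the single-subset MNW bound to each piece, and sum to obtain
\[
    \frac{u_i(C_j)}{u_i(A_i)} \;\le\; \sum_{k=1}^{N} \frac{u_j(S_k)}{u_j(A_j) - u_j(S_k)} \;=\; \frac{u_j(C_j)}{u_j(A_j) - u_j(C_j)/N},
\]
and let $N \to \infty$ to recover the tight bound $\frac{u_j(C_j)}{u_j(A_j)}$. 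I expect this discretization-and-limit step to be the main obstacle of the proof: a single application of MNW to the whole cake is quantitatively too weak, and one must exploit that $C_j$ can be subdivided arbitrarily finely so that the deficit $u_j(C_j)/N$ in the denominator vanishes in the limit. Combining the sharpened cake bound with the sum of the indivisible-good bounds then closes exactly at $\frac{u_j(C_j) + u_j(M_j)}{u_j(A_j)} = 1$.
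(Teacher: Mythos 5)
Your proposal is correct and follows essentially the same route as the paper: both apply the MNW exchange inequality $(u_i(A_i)+u_i(S))(u_j(A_j)-u_j(S))\le u_i(A_i)u_j(A_j)$ to singleton indivisible goods and to an arbitrarily fine partition of $C_j$, then take a limit so that the discretization error vanishes and the right-hand sides sum to exactly $1$ via $u_j(M_j)+u_j(C_j)=u_j(A_j)$. The only (immaterial) differences are that the paper equi-partitions $C_j$ by agent $i$'s measure and keeps the vanishing term in the denominator $u_i(A_i)+u_i(p)$, whereas you equi-partition by agent $j$'s measure and carry the term as $u_j(A_j)-u_j(C_j)/N$.
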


\begin{proof}
    If $i = j$, it's sufficient to show that
    \begin{equation*}
        \sum_{g \in M_i} \frac{u_i(g)}{u_i(A_i) + u_i(g)} \leq \sum_{g \in M_i} \frac{u_i(g)}{u_i(A_i)} = \frac{u_i(M_i)}{u_i(A_i)}.
    \end{equation*}
    
    Now assume $i \neq j$. Suppose $P = \{p_1, \cdots, p_k\}$ is an arbitrary partition of $A_j$. The definition of MNW allocation tells that if agent $j$ gives the $p \in P$ to agent $i$, the product of the utilities does not increase. Thus
    \begin{eqnarray}
        && (u_j(A_j) - u_j(p))(u_i(A_i)+u_i(p)) \leq u_j(A_j)u_i(A_i) \nonumber \\
        &\Leftrightarrow&  u_j(A_j)u_i(p) - u_j(p)u_i(A_i) - u_j(p)u_i(p) \leq 0 \nonumber \\
        &\Leftrightarrow&  u_j(A_j)u_i(p)  \leq u_j(p)(u_i(A_i) + u_i(p)) \nonumber \\
        &\Leftrightarrow&  \frac{u_i(p)}{u_i(A_i) + u_i(p)} \leq \frac{u_j(p)}{u_j(A_j)} \label{eq:mnw move condition}.
    \end{eqnarray}
    Notice that this inequality holds for every $p \in P$, we have
    \begin{equation*}
        \sum_{p \in P} \frac{u_i(p)}{u_i(A_i) + u_i(p)} \leq \sum_{p \in P} \frac{u_j(p)}{u_j(A_j)} = 1.
    \end{equation*}
    
    We can improve the above condition by constructing the partition $P$. 
    From \cref{cor:partition covers mnw move condition}, if each of $p_i \in P$ is small enough, the set of inequalities (\ref{eq:mnw move condition}) actually covers almost all the MNW condition on moving a subset $p \subseteq A_j$.
    Specifically, given $\varepsilon>0$, we consider the following partition
    \begin{equation*}
        P = \{ \{g\}: g \in M_j \} \cup P_C,
    \end{equation*}
    where $P_C$ is a partition of $C_j$ such that $|P_C| = \lceil \frac{u_i(C_j)}{\varepsilon} \rceil$ and for every $p \in P_C$, we have $u_i(p) = \frac{u_i(C_j)}{\lceil \frac{u_i(C_j)}{\varepsilon} \rceil} \leq \varepsilon$. 
    Such partition can be obtained through $|P_C|$ queries in the RW model. 
    With this partition, we have
    \begin{eqnarray*}
        &&\sum_{p \in P} \frac{u_i(p)}{u_i(A_i) + u_i(p)} \\
        &=&\sum_{g \in M_j} \frac{u_i(g)}{u_i(A_i) + u_i(g)} + \sum_{p \in P_C} \frac{u_i(p)}{u_i(A_i) + u_i(p)} \\ 
        &=& \sum_{g \in M_j} \frac{u_i(g)}{u_i(A_i) + u_i(g)} + \frac{u_i(p)|P_C| }{u_i(A_i) + u_i(p)} \\
        &=& \sum_{g \in M_j} \frac{u_i(g)}{u_i(A_i) + u_i(g)} + \frac{u_i(C_j)}{u_i(A_i) + u_i(p)} 
        \leq 1.
    \end{eqnarray*}
    Notice from the Sandwich Theorem that
    \begin{equation*}
        \lim_{\varepsilon \rightarrow 0} \frac{u_i(C_j)}{u_i(A_i) + u_i(p)} = \frac{u_i(C_j)}{u_i(A_i)}.
    \end{equation*}
    Hence, as $\varepsilon \rightarrow 0$, we have
    \begin{equation*}
        \sum_{g \in M_j} \frac{u_i(g)}{u_i(A_i) + u_i(g)} + \frac{u_i(C_j)}{u_i(A_i)} \leq 1,
    \end{equation*}
    which completes the proof.
\end{proof}

Given the above nice properties of MNW allocations, it is not hard to prove that these allocations must be PROP$\alpha$.

\begin{proof}[Proof of \cref{MNW is propa}]
    We suppose \cref{no zero utility item}, \cref{wlog no zero agent}, and \cref{wlog EF1} hold.
    Let $A$ be a MNW allocation and denote the best good to be picked from others' bundles by agent $i$ by 
    \begin{equation}
        w_i = \mathop{\max}\limits_{g\in M \backslash M_i} u_i(g).
    \end{equation}
    To finish the proof, it suffices to show that given an agent $i$,
    \begin{equation*}
         u_i(A_i) + \frac{u_i (M)}{u_i (M) + u_i(C)} \cdot w_i \geq \frac{1}{n}.
    \end{equation*}

    Lemma~\ref{MNW condition on one agent} shows that
    \begin{equation*}
        1 \geq \frac{u_i (C_j)}{u_i (A_i)} + \sum_{g \in M_j} \frac{u_i(g)}{u_i(A_i) + u_i(g)}.
    \end{equation*}
    For every $j \neq i$, since $g \in M_j$, we know $u_i(g) \leq w_i$. Thus
    \begin{eqnarray*} 
        1 &\geq& \frac{u_i (C_j)}{u_i (A_i)} + \sum_{g \in M_j} \frac{u_i(g)}{u_i(A_i) + u_i(g)} \\
        &\geq& \frac{u_i (C_j)}{u_i (A_i)} + \sum_{g \in M_j} \frac{u_i(g)}{u_i(A_i) + w_i} \\
        &=& \frac{u_i (C_j)}{u_i (A_i)} + \frac{u_i(M_j)}{u_i(A_i) + w_i}.
    \end{eqnarray*}
    On the other side, if $j = i$, the above inequality trivially holds since
    \begin{equation*}
        1 = \frac{u_i(C_i)}{u_i(A_i)} + \frac{u_i(M_i)}{u_i(A_i)} \geq \frac{u_i (C_i)}{u_i (A_i)} + \frac{u_i(M_i)}{u_i(A_i) + w_i}.
    \end{equation*}
    We then have an upper bound for $u_i(C)$:
    \begin{align} 
        \frac{u_i (C)}{u_i (A_i)} & = \sum_{j \in [n]} \frac{u_i (C_j)}{u_i (A_i)}
        \leq \sum_{j \in [n]} (1 - \frac{u_i(M_j)}{u_i(A_i) + w_i}) \nonumber \\
        &\leq n - \frac{u_i(M)}{u_i(A_i) + w_i}\label{C upper bound},
    \end{align}
    and for $u_i(M)$:
    \begin{equation} \label{M upper bound}
        \frac{u_i(M)}{u_i(A_i) + w_i} = \sum_{j \in [n]} \frac{u_i(M_j)}{u_i(A_i) + w_i} \leq \sum_{j \in [n]} 1 = n.
    \end{equation}
    Therefore,
    \begin{align*}
        &~u_i(A_i) + \frac{u_i (M)}{u_i (M) + u_i(C)} \cdot w_i \\ 
        \geq&~u_i(A_i) + \frac{u_i (M)}{u_i (M) + nu_i(A_i) - \frac{u_i(M)u_i(A_i)}{u_i(A_i) + w_i}} \cdot w_i \tag{by \cref{C upper bound}} \\
        =&~u_i(A_i) + \frac{u_i (M)}{nu_i(A_i) + \frac{w_i u_i (M)}{u_i(A_i) + w_i}} \cdot w_i  \\
        \geq&~u_i(A_i) + \frac{u_i (M)}{nu_i(A_i) + nw_i} \cdot w_i \tag{by \cref{M upper bound}} \\
        =&~u_i(A_i) + \frac{1}{n} \cdot \frac{w_iu_i (M)}{u_i(A_i) + w_i} \\
        =&~\frac{1}{n} \cdot (nu_i(A_i) + u_i(M) - \frac{u_i(A_i)u_i (M)}{u_i(A_i) + w_i})  \\
        \geq&~\frac{1}{n} \cdot (u_i(M) + u_i(C)), \tag{by \cref{C upper bound}}
    \end{align*}
    which completes the proof.
\end{proof}

Finally, we remark that our analysis is tight.

\begin{theorem} \label{thm:MNW only implies propa}
    For any $\varepsilon > 0$, an MNW allocation may not be a PROP$(1-\varepsilon)\alpha$ allocation.
\end{theorem}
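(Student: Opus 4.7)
The plan is to build, for each $\varepsilon > 0$, an explicit instance together with an MNW allocation on it that violates PROP$(1-\varepsilon)\alpha$. Since Theorem~\ref{MNW is propa} already shows every MNW allocation is PROP$\alpha$, what I need is an instance in which some agent saturates PROP$\alpha$ tightly, so that shrinking the fraction coefficient by $\varepsilon$ pushes that agent strictly below the $1/n$ threshold.

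The natural candidate is the classical PROP1-tight instance transplanted to the mixed setting. Given $\varepsilon>0$, I would pick an integer $n$ with $n > 1/\varepsilon$ and take $n$ identical agents, $n-1$ identical indivisible goods each worth $1/(n-1)$, and a cake valued at $0$ by every agent (so $\alpha_i = 1$ for all $i$). By pigeonhole, at most $n-1$ agents can simultaneously hold a good of positive value, and any allocation that assigns one distinct good to each of some $n-1$ agents' bundles attains both the maximal count $n-1$ of positive-utility agents and the maximal product $(1/(n-1))^{n-1}$. Every such assignment is therefore an MNW allocation.

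Let $i^{\star}$ denote the remaining agent. Then $u_{i^{\star}}(A_{i^{\star}}) = 0$, $\alpha_{i^{\star}} = 1$, and $u_{i^{\star}}(o) = 1/(n-1)$ for every $o \in M \setminus M_{i^{\star}}$, whence
\[
    u_{i^{\star}}(A_{i^{\star}}) + (1-\varepsilon)\,\alpha_{i^{\star}}\, u_{i^{\star}}(o) \;=\; \frac{1-\varepsilon}{n-1} \;<\; \frac{1}{n},
\]
where the last inequality is equivalent to $\varepsilon > 1/n$ and is guaranteed by the choice $n > 1/\varepsilon$. Hence PROP$(1-\varepsilon)\alpha$ fails for agent $i^{\star}$, completing the counterexample.

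The hard part is essentially nil: the key observation is that PROP$\alpha$ collapses to PROP1 once $u_i(C) = 0$, so the well-known PROP1 tight example transfers directly and its slack of exactly $1/(n(n-1))$ is overwhelmed by $\varepsilon\,\alpha_{i^{\star}}\, u_{i^{\star}}(o)$ as soon as $n > 1/\varepsilon$. If a strictly mixed instance is preferred, one can instead use a tiny cake of total value $\delta \in (0, 1/n)$ (with the $n-1$ goods rescaled to $(1-\delta)/(n-1)$ each); the MNW structure is unchanged because $\delta < 1/n$ forces all cake to the remaining agent, and by continuity in $\delta$ the violation persists for sufficiently small $\delta$.
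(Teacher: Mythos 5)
Your construction is correct, but it takes a genuinely different route from the paper's. You transplant the classical PROP1-tight instance: $n$ identical agents, $n-1$ identical goods worth $\frac{1}{n-1}$ each, and a (worthless or $\delta$-small) cake, so that the left-out agent has $\alpha_{i^\star}=1$ (or $\alpha_{i^\star}\to 1$) and utility $0$ (or $\delta$), and the slack $\frac{1}{n(n-1)}$ in her PROP$\alpha$ condition is killed by choosing $n>1/\varepsilon$. The verification that every good-to-agent bijection is MNW is sound (with $n-1$ positively-valued items, at most $n-1$ agents can have positive utility, and achieving that count forces one good per agent), and since \emph{every} MNW allocation then leaves some agent empty-handed, you get the non-uniqueness issue for free. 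The paper instead fixes $n=3$ and builds a genuinely mixed instance in which the violating agent is the one who \emph{holds the cake}: she values the cake at $1$ and each of the many small outside goods at $x$, so her indivisibility ratio tends to $\tfrac{2}{3}$ and the PROP$f(\alpha)$ condition degenerates to $f(\tfrac23)\ge\tfrac23$ as $x\to0$; a second, perturbed instance makes the MNW allocation unique. The trade-off: your argument is more elementary and self-contained but requires $n$ to grow with $1/\varepsilon$ and only witnesses tightness at the boundary ratio $\alpha\approx 1$ (essentially the pure-indivisible regime), whereas the paper's example shows the coefficient $1$ in PROP$\alpha$ cannot be improved even for a fixed small number of agents and an interior indivisibility ratio, i.e., for an agent whose own bundle is entirely divisible. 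Both establish the stated theorem; if you want your version to say something beyond the indivisible-goods degeneration, the $\delta$-perturbation you sketch should be spelled out (the one-line local-exchange argument showing MNW keeps all cake with the empty agent when $\delta<1/n$ suffices).
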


\begin{proof}
    Consider the following instance. The allocation $A_1 = C_1, A_2 = M_2, A_3 = M_3$ is one of the MNW allocations. Here, we do not normalize the total utility for better illustration.
    \begin{center}
    \begin{tabular}{cccc}
        \toprule
        & $C_1$ & $|M_2| = \frac{1 + x}{x}$ & $|M_3| = \frac{1 + x}{x}$ \\
        \midrule
        $u_1(\cdot)$ & 1 & $x, \forall o \in M_2$ & $x, \forall o \in M_3$  \\
        $u_2(\cdot)$ & 0 & $x, \forall o \in M_2$ & $x, \forall o \in M_3$ \\
        $u_3(\cdot)$ & 0 & $x, \forall o \in M_2$ & $x, \forall o \in M_3$ \\
        \bottomrule
    \end{tabular}
    \end{center}
    For agent 1, the PROP$f(\alpha)$ requires
    \begin{equation*}
        1 + f(\frac{2+2x}{3+2x})\cdot x \geq \frac{3+2x}{3},
    \end{equation*}
    which is equivalent to
    \begin{equation*}
        f(\frac{2+2x}{3+2x}) \geq \frac{2}{3}.
    \end{equation*}
    Note that $\lim_{x \rightarrow 0} \frac{2+2x}{3+2x} = \frac{2}{3}$. For any $\varepsilon > 0$, when $f(x) = x - \varepsilon$ or $f(x) = (1-\varepsilon)x$, MNW both failed to satisfy PROP$f(\alpha)$ as $x \rightarrow 0$ in this example.

    One may argue that there are other MNW allocations in the example above. We could check the following instance. The allocation $A_1 = C_1, A_2 = M_2, A_3 = M_3$ is the only MNW allocation. 
    \begin{center}
    \begin{tabular}{cccc}
        \toprule
        & $C_1$ & $|M_2| = \frac{1 + x}{x}$ & $|M_3| = \frac{1 + x}{x}$ \\
        \midrule
        $u_1(\cdot)$ & 1 & $x - x^3, \forall o\in M_2$ & $x - x^3, \forall o\in M_3$ \\
        $u_2(\cdot)$ & 0 & $x, \forall o\in M_2$ & $x, \forall o\in M_3$ \\
        $u_3(\cdot)$ & 0 & $x, \forall o\in M_2$ & $x, \forall o\in M_3$ \\
        \bottomrule
    \end{tabular}
    \end{center}
    Now we have
    \begin{equation*}
        1 + f(\frac{2+2x-2x^2-2x^3}{3+2x-2x^2-2x^3})\cdot x \geq \frac{3+2x-2x^2-2x^3}{3},
    \end{equation*}
    or
    \begin{equation*}
        f(\frac{2+2x-2x^2-2x^3}{3+2x-2x^2-2x^3}) \geq \frac{2-2x-2x^2}{3}.
    \end{equation*}
    When $x \rightarrow 0$ we have
    \begin{equation*}
        f(\frac{2}{3}) \geq \frac{2}{3},
    \end{equation*}
    which fails to satisfy \propfa~when $f(x) = x - \varepsilon$ or $f(x) = (1-\varepsilon)x$.
\end{proof}

\section{Relation with Other Fairness Notions}\label{sec:extension}

We explore the relation among \effa, \propfa~and other notions (EFM and MMS) in the mixed goods setting. A summary of the results in this section is provided in Figure~\ref{fig:relation}.

\begin{figure}[ht]
    \centering    \includegraphics[width=0.45\textwidth]{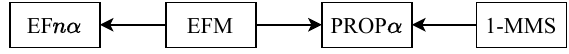} 
    \caption{Relations among different fairness notions, where $X \rightarrow Y$ means that an allocation satisfying $X$ must also satisfy $Y$. }
    \label{fig:relation}
\end{figure}

\subsection{Connections to EFM}
We first discuss the relations of our ``up to a fraction'' fairness notions with EFM, proposed in \citep{BeiLiLi20}. 

\begin{definition}[EFM]\label{def:EFM}
    An allocation $\mathcal{A}$ is said to be \emph{envy-free for mixed goods (EFM)} if for any two agents $i, j \in N$:
    \begin{itemize}
        \item if $C_j = \emptyset$ and $M_j \neq \emptyset$, there exists $o \in M_j$ such that $u_i(A_i) \geq u_i(A_j \backslash \{o\})$,
        \item otherwise, $u_i(A_i) \geq u_i(A_j)$.
    \end{itemize}
\end{definition}

It is easy to verify that EFM does not imply EF$\alpha$; recall the example in Section \ref{sec:intro}.
However, EFM can imply a generalized version of EF$\alpha$, as shown in the theorem below.

\begin{theorem} \label{thm:efm implies efna}
    Any EFM allocation is EF$n\alpha$.
\end{theorem}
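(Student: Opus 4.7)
The plan is to fix two arbitrary agents $i, j$ and derive EF$n\alpha$ from the EFM guarantee by a two-level case analysis: first on which branch of the EFM definition applies to the pair $(i, j)$, and then, in the harder branch, on whether $\alpha_i \geq 1/n$.

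First, if $C_j \neq \emptyset$ (or $M_j = \emptyset$, in which case the EF$n\alpha$ condition reduces to envy-freeness), EFM directly yields $u_i(A_i) \geq u_i(A_j)$ and EF$n\alpha$ is immediate for any choice of item in $M_j$. Otherwise $C_j = \emptyset$ and $M_j \neq \emptyset$, and EFM only provides $u_i(A_i) \geq u_i(A_j) - u_i(o)$ for some $o \in M_j$. When $\alpha_i \geq 1/n$, the multiplier $n\alpha_i \geq 1$ absorbs the EFM slack using the same $o$: $u_i(A_j) - n\alpha_i \cdot u_i(o) \leq u_i(A_j) - u_i(o) \leq u_i(A_i)$.

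The main obstacle is the remaining subcase $C_j = \emptyset$, $M_j \neq \emptyset$, and $\alpha_i < 1/n$, in which $n\alpha_i \cdot u_i(o)$ may be strictly smaller than $u_i(o)$, so the one-good EFM slack cannot be charged against a single $o \in M_j$ through the same chain of inequalities. The plan is to bypass the EF1-type bound entirely by showing that in this regime agent $i$ actually does not envy $j$ at all, so EF$n\alpha$ trivializes. The intuition is that a small $\alpha_i$ forces agent $i$ to place most of her utility on the cake, and EFM then forces every cake-holding bundle to be dominated by $A_i$ from $i$'s perspective. Concretely, let $S = \{k : C_k \neq \emptyset\}$; since $j \notin S$ we have $|S| \leq n-1$, and EFM gives $u_i(A_i) \geq u_i(A_k) \geq u_i(C_k)$ for each $k \in S$. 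Summing yields $u_i(A_i) \geq u_i(C)/|S| \geq (1 - \alpha_i)/(n-1)$. The elementary inequality $(1-\alpha_i)/(n-1) \geq \alpha_i$ is equivalent to $\alpha_i \leq 1/n$, so combining with $u_i(A_j) = u_i(M_j) \leq u_i(M) = \alpha_i$ gives $u_i(A_i) \geq \alpha_i \geq u_i(A_j)$, which is the desired envy-freeness.
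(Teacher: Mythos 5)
Your proposal is correct and follows essentially the same strategy as the paper's proof: split on whether $\alpha_i \geq 1/n$ (where the multiplier $n\alpha_i \geq 1$ lets the EF1-type guarantee of EFM absorb the slack) and otherwise show that agent $i$ does not envy the cake-free bundle $A_j$ at all. Your handling of the second case is a slightly cleaner variant — you get the direct averaging bound $u_i(A_i) \geq u_i(C)/|S| \geq (1-\alpha_i)/(n-1) \geq \alpha_i \geq u_i(A_j)$ over the at most $n-1$ cake-holding agents, whereas the paper argues by contradiction (via a further sub-case on $u_i(A_i)$ versus $1/n$) that all $n$ agents must hold cake — but the underlying idea is the same.
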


\begin{proof}
    Fix an agent $1$ arbitrarily. If $\alpha_1 \ge \frac{1}{n}$, EF$n\alpha$ holds since EF1 is guaranteed by EFM. Otherwise, we have $u_1(C) > \frac{n-1}{n}$ and $u_1(M) < \frac{1}{n}$. 
    
    If $u_1(A_1) \geq \frac{1}{n}$, suppose $u_1(A_1) < u_1(A_i)$, EFM tells $C_i = \emptyset$ and thus $u_1(M_i) = u_1(A_i) > u_1(A_1) \geq \frac{1}{n} $, which contradicts with $u_1(M) < \frac{1}{n}$. Therefore, EF holds in this case.

    We then consider the case when $u_1(A_1) < \frac{1}{n}$.
    Since $u_1(C)>\frac{n-1}{n}$ and any agent $i$ containing some pieces of cake must have $u_1(C_i)\le u_1(A_i)<u_1(A_1)<\frac1n$ by the definition of EFM, $u_1(C_i) > 0$ must hold for all $i\in N$ otherwise there must exist some divisible goods which cannot be allocated.
    Thus, agent 1 must not envy any other agent.
\end{proof}

On the other side, we show that EF$n\alpha$ is the best guarantee under EF$f(\alpha)$ that an EFM allocation ensures.

\begin{theorem} \label{thm:efm only implies efna}
    For any $\varepsilon > 0$, an EFM allocation may not be EF$(n-\varepsilon)\alpha$.
\end{theorem}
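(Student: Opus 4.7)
The plan is to construct an explicit instance, parameterized by a large integer $k$, in which some EFM allocation fails EF$(n-\varepsilon)\alpha$ once $k$ is chosen large enough. The guiding intuition comes from the proof of Theorem~\ref{thm:efm implies efna}: EFM implies EF$n\alpha$ via the EF1 guarantee whenever $\alpha_i \geq 1/n$, and the factor $n$ becomes tight precisely when $\alpha_i$ sits exactly at the threshold $1/n$ and the EF1 inequality is tight. I therefore aim to engineer an instance whose indivisibility ratio approaches $1/n$ from above while the EF1 slack inside EFM is driven to zero.

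Concretely, I take $n$ identical agents, $k$ identical indivisible goods each of value $x$, and a homogeneous cake of total value $y$, and then consider the allocation that hands all $k$ indivisible goods to agent $n$ and splits the cake equally among agents $1,\ldots,n-1$. Requiring EFM forces, for each $i<n$ comparing to agent $n$ (whose bundle contains only indivisible goods), the inequality $y/(n-1) \geq (k-1)x$; I set this tight, i.e. $y=(n-1)(k-1)x$, so that EF1 is tight for that comparison. Together with the normalization $kx+y=1$, this pins down $x=1/\bigl(n(k-1)+1\bigr)$ and yields $\alpha = kx = k/\bigl(n(k-1)+1\bigr)$, which tends to $1/n$ as $k\to\infty$.

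Verifying EFM is then a short bookkeeping check: agent $n$ has $u_n(A_n)=kx \geq (k-1)x = u_n(A_j)$ for $j<n$, and each agent $i<n$ sees equal values $(k-1)x$ for her own bundle, for $A_n$ minus any good, and for every other cake bundle. For the EF$(n-\varepsilon)\alpha$ condition of an agent $i<n$ against agent $n$, the requirement $u_i(A_i) \geq u_i(A_n) - (n-\varepsilon)\alpha\cdot x$ collapses to $(n-\varepsilon)\alpha \geq 1$, which rearranges to $k \leq (n-1)/\varepsilon$. Hence choosing any $k > (n-1)/\varepsilon$ delivers the desired counterexample. The only mildly delicate step is balancing the normalization against the tight EFM inequality so that the limit $\alpha \to 1/n$ is attained cleanly; once the shape of the instance is fixed this is routine algebra, so I do not anticipate a substantive obstacle.
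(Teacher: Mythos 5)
Your proposal is correct and takes essentially the same route as the paper's proof: both construct the instance with identical agents in which all identical indivisible goods go to one agent and the cake is split equally among the rest, with the cake's value chosen so that the EF1 comparison inside EFM is tight and $\alpha \to 1/n$; your integer parameter $k$ corresponds to the paper's $1/x$ with $x = \varepsilon/n$. The only differences are cosmetic --- you normalize utilities and isolate the exact threshold $k > (n-1)/\varepsilon$, which incidentally sidesteps the paper's integrality caveat on $1/x$.
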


\begin{proof}
    Let $x = \frac{\varepsilon}{n}$. Consider the following instance with $n$ agents, $\frac{1}{x}$ indivisible goods and one cake. 
    Here, without loss of generality, we assume $\frac1x$ is an integer, otherwise we can choose another $\varepsilon'=\frac{1}{\lceil \frac{1}{\varepsilon}\rceil}$ and use the corresponding instance under this $\varepsilon'$.
    we do not normalize the total utility for better illustration here either.
    \begin{center}
    \begin{tabular}{cccc}
        \toprule
        & $|M| = \frac{1}{x}$ & $C$ & $\alpha$ \\
        \midrule
        $u_i(\cdot), \forall i \in [n]$ & $x, \forall o \in M$ & $(n-1)(1-x)$ & $\frac{1}{(1-x)n+x}$ \\
        \bottomrule
    \end{tabular}
    \end{center}
    By allocating $M$ to agent $1$ and dividing $C$ equally among the rest of the agents, we have $u_i(A_1) = 1$ and $u_i(A_j) = 1-x$ for any $i\in[n]$ and $j \in [n] \backslash \{1\}$. It's easy to check that this is an EFM allocation. However, for any good $g \in A_1$,
    \begin{eqnarray*}
        && u_2(A_1) - (n-\varepsilon)\alpha_2 \cdot u_2(g) \\
        &=& 1 - (n - \varepsilon) \cdot \frac{1}{(1-x)n+x} \cdot x \\
        &>& 1 - \frac{n - \varepsilon}{(1-x)n} \cdot x = u_2(A_2).
    \end{eqnarray*}
    This implies that the condition of EF$(n-\varepsilon)\alpha$ from agent $2$ to agent $1$ is not satisfied.
\end{proof}

We then consider the relation between PROP$\alpha$ and EFM.

\begin{theorem} \label{thm:efm implies propa}
    An EFM allocation is \propa.
\end{theorem}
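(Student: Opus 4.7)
The plan is to fix an arbitrary agent $i$ and verify PROP$\alpha$ for $i$, noting first that the normalization $u_i(A)=1$ gives $\alpha_i=u_i(M)$. Let $S=\{j\neq i : u_i(A_j)>u_i(A_i)\}$ be the set of agents whom $i$ envies. If $S=\emptyset$, averaging $\sum_{j\in N}u_i(A_j)=1$ immediately gives $u_i(A_i)\ge 1/n$, so PROP$\alpha$ holds trivially. From here on I may therefore assume $S\neq\emptyset$, and in fact $u_i(A_i)<1/n$ (else PROP already holds).

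For each $j\in S$, the EFM condition forces $C_j=\emptyset$ and the existence of some $o_j\in M_j$ with $u_i(A_j)-u_i(o_j)\le u_i(A_i)$. Let $o^*\in\bigcup_{j\in S}M_j$ be a good maximizing $u_i$; then $u_i(A_j)\le u_i(A_i)+u_i(o^*)$ for every $j\in S$, while $u_i(A_j)\le u_i(A_i)$ for every $j\notin S\cup\{i\}$. Summing these bounds together with the term $u_i(A_i)$ itself and using $\sum_{j\in N}u_i(A_j)=1$, I obtain
\begin{equation*}
    1 \le n\,u_i(A_i)+|S|\cdot u_i(o^*),
    \qquad\text{i.e.,}\qquad
    u_i(A_i)+\tfrac{|S|}{n}\,u_i(o^*)\ge \tfrac{1}{n}.
\end{equation*}
Since $o^*\in M\setminus M_i$, it therefore suffices to show that $\alpha_i\ge |S|/n$.

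For the last step I exploit $C_j=\emptyset$ for $j\in S$ to write
\begin{equation*}
    \alpha_i=u_i(M)\ \ge\ \sum_{j\in S}u_i(M_j)\ =\ \sum_{j\in S}u_i(A_j).
\end{equation*}
Using $u_i(A_j)\le u_i(A_i)$ for each $j\notin S\cup\{i\}$ in the identity $\sum_{j\in N}u_i(A_j)=1$ then yields $\sum_{j\in S}u_i(A_j)\ge 1-(n-|S|)\,u_i(A_i)$, and the assumption $u_i(A_i)<1/n$ makes this strictly greater than $|S|/n$. Hence $\alpha_i>|S|/n$, which combined with the preceding inequality gives $u_i(A_i)+\alpha_i\,u_i(o^*)\ge 1/n$.

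The subtle part of this argument is the final bridge between the two inequalities: the combinatorial averaging over $S$ yields a factor of $|S|/n$, while the PROP$\alpha$ relaxation only offers a factor of $\alpha_i$. It is precisely the EFM requirement that envied bundles contain no divisible goods which allows me to bound $u_i(M)$ from below by $\sum_{j\in S}u_i(A_j)$ and thereby translate the cardinality bound into an indivisibility-ratio bound. Without this ``no cake in envied bundles'' clause of EFM, the two quantities would not align.
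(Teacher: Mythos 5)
Your proof is correct and follows essentially the same route as the paper's: both first sum the EFM/EF1-type bounds over all bundles to get $u_i(A_i)+\frac{|S|}{n}u_i(o^*)\ge\frac{1}{n}$, and then use the ``no cake in envied bundles'' clause of EFM to lower-bound the indivisibility ratio by $|S|/n$. The only cosmetic difference is that the paper handles this last step by a case split and contradiction (if $u_i(M)<\frac{|S|}{n}$ then some cake must sit in an envied bundle), whereas you derive $\alpha_i>|S|/n$ directly; the underlying argument is identical.
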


\begin{proof}
    Fix an agent $1$ arbitrarily. If $u_1(A_1)\ge \frac{1}{n}$, agent 1 achieves PROP (and thus PROP$\alpha$). Otherwise, we assume $u_1(A_1)<\frac1n$ and pick the good $w\notin M_1$ such that $u_1(w)=\max_{g\notin M_1}u_1(g)$.

    We assume an integer $k$ such that $u_1(A_1)\ge u_1(A_j)$ for all $j\in[k]$ and $u_1(A_1)<u_1(A_j)$ for all $j>k$. By EFM, we have: $u_1(A_1)+u_1(w)\ge u_1(A_j)$ for all $j>k$.
    Adding all these inequalities for all $j\in[n]$, we have:
    $u_1(A_1)+\frac{n-k}{n}u_1(w)\ge \frac1n$.

    If $u_1(M)\ge \frac{n-k}{n}$, the above inequality ensures PROP$\alpha$. 
    If not, we have $u_1(C)>\frac{k}{n}$. Since $\frac1n>u_1(A_1)\ge u_1(A_j)$ for all $j\in[k]$, there exists some cake allocated to agent $j>k$, a contradiction with the definition of EFM.
\end{proof}

This relation is also tight due to the following result.

\begin{theorem} \label{thm:efm only implies propa}
    For any $\varepsilon > 0$, an EFM allocation may not be PROP$(1-\varepsilon)\alpha$.
\end{theorem}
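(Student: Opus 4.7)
The plan is to exhibit a direct counterexample by reusing (and slightly re-purposing) the instance from the proof of Theorem~\ref{thm:MNW only implies propa}. Take $n = 3$ agents, one cake $C$, and $2 \cdot \frac{1+x}{x}$ indivisible goods for a small parameter $x > 0$, where agent $1$ values only the cake (at $1$) and every agent values each indivisible good at $x$. Consider the allocation $\mathcal{A}$ with $A_1 = C$, $A_2 = M_2$, $A_3 = M_3$, where $|M_2| = |M_3| = \frac{1+x}{x}$. We do not normalize, just as in Theorem~\ref{thm:MNW only implies propa}. The goal is to show $\mathcal{A}$ is EFM but fails PROP$(1-\varepsilon)\alpha$ for agent $1$ once $x$ is small.

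First I would verify EFM for $\mathcal{A}$. Envy from agents $2$ or $3$ toward any other agent is vacuous because they value only their own indivisible goods (in fact, both $u_2(A_2) = u_3(A_3) = 1+x$ while their values on the other agents' bundles are lower, so no envy exists at all). Envy from agent $1$ toward agent $2$ (and symmetrically agent $3$) must be checked under the ``only indivisible goods'' clause of Definition~\ref{def:EFM}: agent $1$ needs EF1 toward $M_2$. Since $u_1(A_2) = 1+x$ and $u_1(A_1) = 1$, removing any single good from $M_2$ yields value $1$, so EF1 is satisfied. Hence $\mathcal{A}$ is EFM.

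Next I would instantiate the PROP$(1-\varepsilon)\alpha$ condition for agent $1$. Agent $1$'s total utility is $u_1(A) = 1 + 2(1+x) = 3 + 2x$, so the proportional share is $(3+2x)/3$. The indivisibility ratio is $\alpha_1 = u_1(M)/u_1(A) = (2+2x)/(3+2x)$, and every outside good yields $u_1(o) = x$. The PROP$(1-\varepsilon)\alpha$ inequality becomes
\begin{equation*}
    1 + (1-\varepsilon)\cdot\frac{2+2x}{3+2x}\cdot x \;\geq\; \frac{3+2x}{3},
\end{equation*}
equivalently $(1-\varepsilon)\cdot\frac{2+2x}{3+2x} \geq \frac{2}{3}$. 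Letting $x\to 0^{+}$, the left-hand side tends to $\frac{2(1-\varepsilon)}{3} < \frac{2}{3}$, so for all sufficiently small $x > 0$ the inequality is violated, which proves the theorem.

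There is no real obstacle here: the construction is essentially the same as in Theorem~\ref{thm:MNW only implies propa}, and the only new content is the EFM verification, which is straightforward because the only non-trivial envy direction (from agent $1$ to an indivisible-only bundle) requires only EF1 and is exactly attained up to one good. One minor point to double-check is that the bundle sizes $\frac{1+x}{x}$ be integer-valued; this can be handled by restricting $x$ to the sequence $x = 1/K$ for integer $K\to\infty$, which suffices since the claim is an asymptotic one as $x\to 0$.
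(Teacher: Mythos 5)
Your proof is correct and follows essentially the same strategy as the paper's: a counterexample in which one agent holds only the cake, worth just under her proportional share, while the other agents hold bundles of many tiny indivisible goods, so that EFM holds via the EF1 clause but adding a $(1-\varepsilon)\alpha_1$-fraction of a single good of value $x$ cannot close the gap once $x$ is small. The only differences are cosmetic: you fix $n=3$ and argue by taking $x\to 0$, whereas the paper chooses $x=(n-1)\varepsilon$ explicitly and gives the counterexample for every $n$ (which the theorem as stated does not strictly require).
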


\begin{proof}
    Let $x = (n-1)\varepsilon$. Consider the following instance with $n$ agents, $\frac{n-1}{x}$ indivisible goods, and one cake. Here, without loss of generality, we assume $\frac1x$ is an integer and $x\le 1$, otherwise we can choose another $\varepsilon'=\frac{1}{\lceil \frac{1}{\min\{1/(n-1),\varepsilon\}}\rceil}$ and use the corresponding instance under this $\varepsilon'$.
    we do not normalize the total utility for better illustration here either.
    \begin{center}
    \begin{tabular}{cccc}
        \toprule
        & $|M| = \frac{n-1}{x}$ & $C$ & $\alpha$ \\
        \midrule
        $u_i(\cdot), \forall i \in [n]$ & $x, \forall o \in M$ & $1-x$ & $\frac{n-1}{n-x}$ \\
        \bottomrule
    \end{tabular}
    \end{center}
    By allocating $C$ to agent $1$ and dividing $M$ equally among the rest of the agents, we have $u_i(A_1) = 1-x$ and $u_i(A_j) = 1$ for any $i\in[n]$ and $j \in [n] \backslash \{1\}$. It is easy to check that this is an EFM allocation. However, for any good $g \in A \backslash A_1$,
    \begin{eqnarray*}
        && u_1(A_1) + (1-\varepsilon)\alpha_1 \cdot u_1(g) \\
        &=& 1 - x + (1-\varepsilon)\cdot \frac{n-1}{n-x} \cdot x \\
        &=& 1 - (\frac{1-x}{n-x} + \frac{n-1}{n-x}\varepsilon) \cdot x \\
        &=& 1 - \frac{x}{n-x} \\
        &<& 1 - \frac{x}{n} = \frac{n-x}{n}.
    \end{eqnarray*}
    This implies that the condition of PROP$(1-\varepsilon)\alpha$ for agent $1$ is not satisfied.
\end{proof}

It is worth noting that \citet{BeiLiLi20} designed an algorithm for computing an EFM allocation. However, their algorithm utilizes the {\em perfect allocation oracle}, which is not in polynomial time when we have a heterogeneous cake, and consists of the intricate envy-graph maintenance and envy-cycle elimination subroutine. On the contrary, \cref{alg:propalpha} runs in polynomial time and is simple to implement.

\subsection{Connections to MMS}
We also consider the relation of our ``up to a fraction'' fairness notions with MMS as defined in the following.

\begin{definition}[$\beta$-MMS]\label{def:MMS}
    Let $\Pi_n(A)$ be the set of all $n$-partitions of $A$. 
    The maximin share (MMS) of any agent $i \in N$ is defined as
    \[
    \mathrm{MMS}_i = \max_{\mathcal{P} = (P_1, P_2, \cdots, P_k) \in \Pi_n(A)} \min_{j \in N} u_i(P_j).
    \]   
    An allocation that reaches $\mathrm{MMS}_i$ is called an MMS-allocation of agent $i$. 
    Given any $\beta \in [0,1]$, allocation $\mathcal{A}$ is $\beta$-approximate MMS fair ($\beta$-MMS) if $u_i(A_i) \geq \beta \cdot \mathrm{MMS}_i$ for every agent $i \in N$.
    When $\beta = 1$, we simply
    write MMS.
\end{definition}

It is easy to see that when the goods are all divisible, MMS coincides with PROP. 
When the goods are all indivisible, MMS is strictly weaker than PROP but implies PROP1 \citep{caragiannis2023new}. 
Our next result is a generalization that encompasses these two extreme cases.

\begin{theorem} \label{thm:MMS implies propa}
   Any MMS allocation is \propa.
\end{theorem}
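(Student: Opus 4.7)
The plan is to combine the MMS bound $u_i(A_i)\geq \mathrm{MMS}_i$ with two structural lower bounds on $\mathrm{MMS}_i$. First I would establish a \emph{Key Lemma}: $\mathrm{MMS}_i+\alpha_i\cdot u_i(o^i_{\max})\geq 1/n$, where $o^i_{\max}$ is an indivisible good attaining $\max_{o\in M}u_i(o)$. To prove it I would exhibit a partition attaining min value at least $1/n-\alpha_i u_i(o^i_{\max})$: isolate $o^i_{\max}$ in one part, spread the remaining indivisible goods across the other $n-1$ parts by a longest-processing-time style greedy procedure (always placing the next largest good into the part with smallest current value), and top up the smaller parts with cake via water-filling. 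A direct algebraic check reduces this to the inequality $n\,u_i(o^i_{\max})\bigl(1-(n-1)\alpha_i\bigr)\leq 1$, which holds for all $n\geq 2$ since the left-hand side, maximized over $0\leq u_i(o^i_{\max})\leq \alpha_i\leq 1$, attains at most $n/(4(n-1))\leq 1$.

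Given the Key Lemma, I would fix agent $i$ and assume $u_i(A_i)<1/n$ (otherwise PROP holds and PROP$\alpha$ is immediate, modulo the degenerate subcase $M\setminus M_i=\emptyset$, which is handled separately by the cake-only bound $\mathrm{MMS}_i\geq (1-\alpha_i)/n$ obtained by splitting the cake equally across all $n$ parts). I then split into two cases based on whether $o^i_{\max}\in M_i$.

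If $o^i_{\max}\notin M_i$, set $o=o^i_{\max}\in M\setminus M_i$; the Key Lemma yields
\[
u_i(A_i)+\alpha_i u_i(o)\;\geq\;\mathrm{MMS}_i+\alpha_i u_i(o^i_{\max})\;\geq\;1/n.
\]
If $o^i_{\max}\in M_i$, then $u_i(A_i)\geq u_i(M_i)\geq u_i(o^i_{\max})$, which combined with $u_i(A_i)<1/n$ forces $u_i(o^i_{\max})<1/n$, so every indivisible good is small. In this regime I would invoke the MMS partition $(P^*_1,\dots,P^*_n)$: by averaging, some part $P^*$ satisfies $u_i(P^*)\geq 1/n$; and a case analysis on where the indivisible mass of $P^*$ lies (combined with $u_i(M_i)\leq u_i(A_i)<1/n$) shows that $P^*$ must contain an indivisible good $o^*\in M\setminus M_i$ whose value is large enough to make $u_i(A_i)+\alpha_i u_i(o^*)\geq 1/n$.

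The main obstacle is proving the Key Lemma: it requires a careful case split on whether the LPT step produces a maximum part value exceeding $1/n$, and on whether $u_i(o^i_{\max})\leq 1/n$, with the algebraic inequality above serving as the pivotal step. The secondary challenge is the case $o^i_{\max}\in M_i$, where one must extract a suitable witness $o^*$ from the MMS partition even when the largest goods concentrate inside $M_i$.
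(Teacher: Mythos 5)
Your route is genuinely different from the paper's. The paper argues by contradiction: it fixes an MMS partition of agent $i$ that minimizes the number of bundles attaining $\mathrm{MMS}_i$, shows all cake must sit in those minimum bundles, extracts a good $o\notin X_i$ from the largest (cake-free) bundle, and derives a contradiction through good-for-cake swaps. You instead lower-bound $\mathrm{MMS}_i$ constructively. Your Key Lemma is in fact true, and essentially by the construction you describe: LPT the indivisible goods into $n$ bins (so the max bin exceeds the min bin by at most $u_i(o^i_{\max})$), then water-fill with cake. If the water level $\lambda$ is below $1/n$, the bins above the water level absorb at most $(n-|T|)\lambda$ worth of cake... rather, the $n-|T|$ bins at the water level absorb all $1-\alpha_i$ of the cake at cost at most $\lambda<1/n$ each, forcing $|T|<n\alpha_i$, and then $1\leq n\lambda+|T|\,u_i(o^i_{\max})$ gives $\lambda>1/n-\alpha_i u_i(o^i_{\max})$. (Your stated pivotal inequality $n\,u_i(o^i_{\max})(1-(n-1)\alpha_i)\leq 1$ does not appear to be the right reduction, and isolating $o^i_{\max}$ in its own part is unnecessary and can leave that part below target when $o^i_{\max}$ is small, but water-filling repairs it; these are fixable details.) Case 1 ($o^i_{\max}\notin M_i$) then goes through cleanly.

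The genuine gap is Case 2, $o^i_{\max}\in M_i$. The averaging step gives a part $P^*$ of the MMS partition with $u_i(P^*)\geq 1/n$, but nothing forces $P^*$ to contain any good of $M\setminus M_i$: $P^*$ can consist entirely of cake (worth up to $1-\alpha_i$) together with goods from $M_i$ (worth up to $u_i(M_i)<1/n$), in which case no witness $o^*$ exists inside $P^*$ and the argument stalls. The repair is to prove a \emph{relativized} Key Lemma rather than mine the MMS partition: suppose for contradiction that $u_i(A_i)<1/n-\alpha_i w_i$ where $w_i=\max_{g\in M\setminus M_i}u_i(g)$; reserve one bin for all of $M_i$ (value at most $u_i(A_i)$), LPT the goods of $M\setminus M_i$ (each worth at most $w_i$) into the remaining bins, and water-fill all $n$ bins with the cake. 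The same accounting as above, with $w_i$ in place of $u_i(o^i_{\max})$, yields a partition whose minimum exceeds $1/n-\alpha_i w_i>u_i(A_i)\geq \mathrm{MMS}_i$, a contradiction. This unified argument subsumes both of your cases and also fixes the degenerate subcase $M\setminus M_i=\emptyset$, where your stated bound $\mathrm{MMS}_i\geq(1-\alpha_i)/n$ is not enough when $0<\alpha_i<1/n$ (there one needs $\mathrm{MMS}_i=1/n$, which holds because the cake alone can equalize all bins). With these repairs your proof is correct and arguably more informative than the paper's, since it produces an explicit quantitative lower bound on $\mathrm{MMS}_i$; the paper's swap argument avoids any constructive partitioning but yields only the implication.
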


\begin{proof}
For an arbitrary MMS allocation $X=\{X_1,\ldots,X_n\}$, we assume this allocation $X$ does not satisfy \propa, i.e., there exists some $i\in[n]$ s.t. $u_i(X_i)<\frac1n-\alpha_iu_i(o)$ for every indivisible good $o\notin X_i$.
Let $Y=\{Y_1,\ldots,Y_n\}$ as the MMS-allocation of agent $i$ which minimizes the number of the bundles with the utility $\mathrm{MMS}_i$ from $i$'s perspective. Without loss of generality, we assume $Y_1$ is the bundle with the largest fraction of the cake $C$ among all bundles with the utility $\mathrm{MMS}_i$ and assume $Y_n$ is the bundle with the largest utility under $u_i$.

Since $X$ is an MMS allocation, $\frac1n>u_i(X_i)\ge \mathrm{MMS}_i=u_i(Y_1)$, we then have $u_i(Y_n)>\frac1n$.
If there exists a bundle $Y_i$ with $u_i(Y_i)>\mathrm{MMS}_i$ which contains some pieces of cake, we can move some of them to the bundle $Y_1$ which can make the allocation $Y$ is no longer the required MMS-allocation as said above.
Thus, the whole cake is shared by the bundles with utility $\mathrm{MMS}_i$, which means that $u_i(Y_1)\ge u_i(C_1)\ge \frac{u_i(C)}{n}$ and $Y_n$ contains only indivisible goods. Here, $C_1$ represents the cake in $Y_1$.

Because $u_i(Y_n)>\frac1n>u_i(X_i)$ and $Y_n$ consists of no cake, there exists a good $o\in Y_n$ such that $o\notin X_i$.
If $u_i(o)\ge \frac1n$, we have $u_i(X_i)+\alpha_iu_i(o)\ge u_i(Y_1)+\frac{u_i(M)}{n}\ge \frac{u_i(C)}{n}+\frac{u_i(M)}{n}=\frac1n$, which leads to a contradiction.
If $u_i(o)\le u_i(C_1)$ (so it is less than $u_i(Y_n)$), we can exchange the good $o$ and an equivalent fraction of $C_1$ and then $Y_n$ becomes a bundle with $u_i(Y_n)>\mathrm{MMS}_i$ which contains some pieces of cake, which also leads to a contradiction with the definition of the allocation $Y$.

For the last case where $\frac1n>u_i(o)>u_i(C_1)$, we have $u_i(Y_1)+u_i(o)-u_i(C_1)\le u_i(X_i)+u_i(o)-\frac{u_i(C)}{n}<u_i(X_i)+u_i(o)-u_i(o)u_i(C)=u_i(X_i)+u_i(o)u_i(M)<\frac1n$.
The last inequality is from the fact that $X_i$ violates the \propa ~condition.
With the fact that $u_i(Y_n)>\frac 1n$, we can exchange the good $o$ and $C_1$ to reach an MMS allocation with a smaller number of the bundles with utility $\mathrm{MMS}_i$, which violates the definition of $Y$.
This completes our proof.
\end{proof}

Recall that MMS allocations may not exist, however, an approximately MMS allocation may not be \propa.

\begin{theorem} \label{thm:approx MMS is not propa}
For any $\beta \in (0,1)$, a $\beta$-MMS allocation may not be \propa.
\end{theorem}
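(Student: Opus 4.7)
The plan is to exhibit, for every $\beta\in(0,1)$, a concrete two-agent instance together with an allocation that is $\beta$-MMS but not \propa. The guiding intuition is that if the indivisible goods are numerous and each one has a tiny value, then the correction term $\alpha_i\cdot u_i(o)$ in the \propa{} criterion can be made arbitrarily small, while the gap between $\beta/n$ and $1/n$ stays strictly positive. This gap leaves room to push one agent's utility below the \propa{} threshold while still matching $\beta\cdot\mathrm{MMS}_i$.

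Concretely, I would take $n=2$ identical agents, a cake of total value $1/2$, and $m$ identical indivisible goods each worth $1/(2m)$, so that $\alpha_i = 1/2$. The first step is to argue that $\mathrm{MMS}_i = 1/2$: since the cake is non-atomic and each indivisible good has value at most $1/(2m) \le 1/2$, a simple balancing argument (distribute the indivisible goods between two bundles in any way, then fill with cake up to $1/2$ on each side) yields a perfect $2$-partition, and this cannot be improved because the total utility is $1$.

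Next I would propose the allocation $A_1 = $ a single slice of cake of value exactly $\beta/2$, and $A_2 = $ everything else. Then $u_1(A_1) = \beta/2 = \beta\cdot\mathrm{MMS}_1$ and $u_2(A_2) = 1-\beta/2 \ge \beta/2 = \beta\cdot\mathrm{MMS}_2$, so the allocation is $\beta$-MMS. Since $M\setminus M_1 = M$ and every good there has value $1/(2m)$, the \propa{} left-hand side for agent~$1$ becomes $\beta/2 + (1/2)\cdot(1/(2m)) = \beta/2 + 1/(4m)$. Choosing any integer $m > 1/(2(1-\beta))$ (which exists because $\beta<1$) makes this strictly less than $1/2 = 1/n$, so agent~$1$ fails \propa.

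The main obstacle, as I see it, is justifying $\mathrm{MMS}_i = 1/2$ carefully, since the indivisibility constraints could in principle forbid a perfect split. This is easy to resolve here because each indivisible good is strictly lighter than the bundle target and the non-atomic cake can absorb any residue, but it is the one step that is not pure algebra. Everything else reduces to verifying a single inequality.
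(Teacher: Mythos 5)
Your proposal is correct, and the single mixed instance you build does establish the theorem: $\mathrm{MMS}_i=1/2$ is witnessed by the partition (all indivisible goods; all cake), both sides of which are worth exactly $1/2$, and your allocation is exactly $\beta$-MMS for agent~$1$ while the best possible \propa{} correction term $\alpha_1\cdot u_1(o)=\frac{1}{4m}$ can be driven below the gap $\frac{1-\beta}{2}$. The paper takes a different, and in some sense more extreme, route: it gives two degenerate counterexamples rather than one mixed one. The first is a pure-cake instance, where $\alpha_i=0$ and \propa{} collapses to PROP, so any allocation that gives some agent strictly less than $1/n$ (which a $\beta$-MMS allocation is permitted to do for $\beta<1$) already fails. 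The second is a pure-indivisible instance with $n/x$ goods of value $x$ each, where $\alpha_i=1$, $\mathrm{MMS}_i=1$, and giving one agent $\frac{1}{x}-2$ goods yields a $(1-2x)$-MMS allocation that is not even PROP1. Each of the paper's examples alone suffices for all $\beta$ (after tuning $x$), but they only exercise the two boundary cases of the indivisibility ratio; your construction has the modest advantage of being a genuinely mixed instance with $\alpha_i=\frac12$, which shows the failure is not an artifact of the degenerate reductions of \propa{} to PROP or PROP1. Your only non-algebraic step, the exact value of $\mathrm{MMS}_i$, is even easier than you suggest: the goods-versus-cake split is already a perfect partition, so no balancing argument is needed.
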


\begin{proof}
    One example is to allocate the cake to $n$ agents. Now an MMS allocation is also PROP (and PROP$\alpha$ since $\alpha_i = 0$ for all $i \in [n]$). Any approximately MMS allocation fails to satisfy MMS and thus PROP$\alpha$.

    Another example is allocating many small goods to $n$ agents. For instance, $\frac{n}{x}$ goods with utility $x$ for all agents. Here for any agent $i$, $\mathrm{MMS}_i = 1$ and $\alpha_i = 1$. By giving the first agent $\frac{1}{x}-2$ goods and allocating the rest of the goods evenly to the rest of the agents (so that each agent obtains at least $\frac{1}{x}$ goods), this allocation is $(1-2x)$-MMS. However, it's obvious that this is not a PROP1 allocation.
\end{proof}

\subsection{Connections to PROPm}
We discuss the relation of our ``up to a fraction'' fairness notion with a variant of \emph{proportionality up to the maximin item} (PROPm) in~\citep{baklanov2021propm}. PROPm is a notion more demanding than PROP1 in the indivisible goods setting as defined below.


\begin{definition}[PROPm~\citep{baklanov2021propm}]
    An allocation is said to be \emph{proportionality up to the maximin item} (PROPm) if $u_i(A_i)+ \max_{j\neq i} \min_{g\in A_i} {u_i(g)} \geq 1/n$.
\end{definition}

In the mixed goods setting, PROP$\alpha$ is parallel to PROPm. Briefly, PROPm strengthens PROP1 by limiting the set of goods that can be chosen to achieve PROP in the indivisible goods setting. Comparatively, our ``up to a fraction'' fairness notions of PROP$\alpha$ strengthens PROP1 by demanding less fraction of such a chosen good in the mixed goods setting. For the mixed goods setting, one possible way to extend PROPm is defined below.

\begin{definition}[PROPmM]
    An allocation is said to be \emph{proportionality up to the maximin item for mixed goods} \emph{PROPmM} in the mixed goods setting if $v_i(A_i)+ \max_{j\neq i,C_j=\emptyset} \min_{g\in M_i} {v_i(g)} \geq 1/n$.
\end{definition}

Specifically, PROPmM allows an agent to choose a good that is the worst in another agent's bundle. By treating the cake as multiple infinitesimal indivisible goods, if some cake is included in a bundle, such worst good would be infinitesimal and yield a value of $0$. Thus, we allow each agent to only choose the smallest items that are not in a bundle containing cake. Notice that this definition is stronger than just substituting the $\min_{g\in A_i}$ with $\min_{g\in M_i}$.

We then compare PROP$\alpha$ and PROPmM.
First, PROP$\alpha$ may be fairer than PROPmM when indivisible items are of similar value. Consider the following instance with 2 identical agents, 3 indivisible goods, and one cake. Allocate the cake to agent 1 and all the indivisible goods to agent 2. This allocation is PROPmM but not PROP$\alpha$.

\begin{table}[h]
    \centering
    \begin{tabular}{c|cc|c}
        \toprule
        &  $g \in M, |M|=3$ & $C$ & $\alpha$ \\
        \midrule
        $u(\cdot)$ &  $0.25$ & $0.25$ & $0.75$\\
        \bottomrule
    \end{tabular}
\end{table}

On the other side, when the cake is small, PROP$\alpha$ is almost PROP1, and PROPmM can be fairer. 
To see an example, consider the following instance with 2 identical agents, 2 indivisible goods, and one cake. Allocate $g_1$ to agent 1 and the rest to agent 2. This allocation is PROP$\alpha$ but not PROPmM.

\begin{table}[h]
    \centering
    \begin{tabular}{c|ccc|c}
        \toprule
        &  $g_1$ & $g_2$ & $C$ & $\alpha$ \\
        \midrule
        $u(\cdot)$ &  $0.4$ & $0.4$ & $0.2$ & $0.8$\\
        \bottomrule
    \end{tabular}
\end{table} 

Therefore, we can conclude that PROP$\alpha$ is incomparable with PROPmM.
We leave the comparisons with other notable notions, e.g., \emph{envy-freeness up to one less-preferred good} (EFL) in~\citep{barman2018groupwise}, as a promising future work.

\section{Conclusion} \label{sec:conclusion}
We study the fair allocation of a mixture of divisible and indivisible goods.
We introduce the indivisibility ratio and fairness notions of envy-free and proportional up to a fractional good, which serves as a smooth connection between EF/PROP and EF1/PROP1.
Our results exhibit the limit of the amount of the fractional item that we need to relax so that a fair allocation is guaranteed, which affirm
the intuition that the more divisible items we have, the
fairer allocations we can achieve. 
There are some problems left open.
For example, there is a constant gap between the upper and lower bounds of the fractional relaxation of EF, and it is not clear whether EF$n\alpha$ and PO are compatible. 
Our paper also unveils intriguing possibilities for future research. One such avenue is proposing alternative relaxations of the ideal fairness principles to better capture the characteristics of mixed scenarios, such as the customized indivisibility ratio in our model. 

\section*{Acknowledgments}
Shengxin Liu is funded by the National Natural Science Foundation of China (No. 62102117), by the Shenzhen Science and Technology Program (No. GXWD20231129111306002), and by the Guangdong Basic and Applied Basic Research Foundation (No. 2023A1515011188).
Bo Li is funded by the National Natural Science Foundation of China (No. 62102333) and Hong Kong SAR Research Grants Council (No. PolyU 25211321).

\bibliographystyle{named}
\bibliography{ijcai24}

\begin{thebibliography}{}

\bibitem[\protect\citeauthoryear{Amanatidis \bgroup \em et al.\egroup }{2023}]{AmanatidisAzBi23}
Georgios Amanatidis, Haris Aziz, Georgios Birmpas, Aris Filos-Ratsikas, Bo~Li, Herv\'{e} Moulin, Alexandros~A. Voudouris, and Xiaowei Wu.
\newblock Fair division of indivisible goods: {R}ecent progress and open questions.
\newblock {\em Artificial Intelligence}, 322:103965, 2023.

\bibitem[\protect\citeauthoryear{Aziz \bgroup \em et al.\egroup }{2020}]{aziz2020prop1}
Haris Aziz, Herv\'{e} Moulin, and Fedor Sandomirskiy.
\newblock A polynomial-time algorithm for computing a {P}areto optimal and almost proportional allocation.
\newblock {\em Operations Research Letters}, 48(5):573--578, 2020.

\bibitem[\protect\citeauthoryear{Baklanov \bgroup \em et al.\egroup }{2021}]{baklanov2021propm}
Artem Baklanov, Pranav Garimidi, Vasilis Gkatzelis, and Daniel Schoepflin.
\newblock Propm allocations of indivisible goods to multiple agents.
\newblock {\em arXiv preprint arXiv:2105.11348}, 2021.

\bibitem[\protect\citeauthoryear{Barman and Krishnamurthy}{2019}]{BK19}
Siddharth Barman and Sanath~Kumar Krishnamurthy.
\newblock On the proximity of markets with integral equilibria.
\newblock In {\em Proceedings of the AAAI Conference on Artificial Intelligence (AAAI)}, pages 1748--1755, 2019.

\bibitem[\protect\citeauthoryear{Barman \bgroup \em et al.\egroup }{2018a}]{barman2018groupwise}
Siddharth Barman, Arpita Biswas, Sanath Krishnamurthy, and Yadati Narahari.
\newblock Groupwise maximin fair allocation of indivisible goods.
\newblock In {\em Proceedings of the AAAI Conference on Artificial Intelligence (AAAI)}, pages 917--924, 2018.

\bibitem[\protect\citeauthoryear{Barman \bgroup \em et al.\egroup }{2018b}]{BKV18}
Siddharth Barman, Sanath~Kumar Krishnamurthy, and Rohit Vaish.
\newblock Finding fair and efficient allocations.
\newblock In {\em Proceedings of the ACM Conference on Economics and Computation (EC)}, pages 557--574, 2018.

\bibitem[\protect\citeauthoryear{Bei \bgroup \em et al.\egroup }{2021a}]{BeiLiLi20}
Xiaohui Bei, Zihao Li, Jinyan Liu, Shengxin Liu, and Xinhang Lu.
\newblock Fair division of mixed divisible and indivisible goods.
\newblock {\em Artificial Intelligence}, 293:103436, 2021.

\bibitem[\protect\citeauthoryear{Bei \bgroup \em et al.\egroup }{2021b}]{BeiLiLu20}
Xiaohui Bei, Shengxin Liu, Xinhang Lu, and Hongao Wang.
\newblock Maximin fairness with mixed divisible and indivisible goods.
\newblock {\em Autonomous Agents and Multi-Agent Systems}, 35(2):34, 2021.

\bibitem[\protect\citeauthoryear{Bei \bgroup \em et al.\egroup }{2023}]{BeiLiLu23}
Xiaohui Bei, Shengxin Liu, and Xinhang Lu.
\newblock Fair division with subjective divisibility.
\newblock In {\em Proceedings of the Conference on Web and Internet Economics (WINE)}, page 677, 2023.

\bibitem[\protect\citeauthoryear{Bhaskar \bgroup \em et al.\egroup }{2021}]{BhaskarSrVa21}
Umang Bhaskar, A.~R. Sricharan, and Rohit Vaish.
\newblock On approximate envy-freeness for indivisible chores and mixed resources.
\newblock In {\em Proceedings of the International Conference on Approximation Algorithms for Combinatorial Optimization Problems (APPROX)}, pages 1:1--1:23, 2021.

\bibitem[\protect\citeauthoryear{Brams and Taylor}{1995}]{brams1995envy}
Steven~J. Brams and Alan~D. Taylor.
\newblock An envy-free cake division protocol.
\newblock {\em The American Mathematical Monthly}, 102(1):9--18, 1995.

\bibitem[\protect\citeauthoryear{Budish}{2011}]{Budish11}
Eric Budish.
\newblock The combinatorial assignment problem: Approximate competitive equilibrium from equal incomes.
\newblock {\em Journal of Political Economy}, 119(6):1061--1103, 2011.

\bibitem[\protect\citeauthoryear{Caragiannis \bgroup \em et al.\egroup }{2019}]{CaragiannisKuMo19}
Ioannis Caragiannis, David Kurokawa, Herv\'{e} Moulin, Ariel~D. Procaccia, Nisarg Shah, and Junxing Wang.
\newblock The unreasonable fairness of maximum {N}ash welfare.
\newblock {\em ACM Transactions on Economics and Computation}, 7(3):12:1--12:32, 2019.

\bibitem[\protect\citeauthoryear{Caragiannis \bgroup \em et al.\egroup }{2023}]{caragiannis2023new}
Ioannis Caragiannis, Jugal Garg, Nidhi Rathi, Eklavya Sharma, Giovanna Varricchio, et~al.
\newblock New fairness concepts for allocating indivisible items.
\newblock In {\em Proceedings of the International Joint Conference on Artificial Intelligence (IJCAI)}, pages 2554--2562, 2023.

\bibitem[\protect\citeauthoryear{Conitzer \bgroup \em et al.\egroup }{2017}]{CFS17}
Vincent Conitzer, Rupert Freeman, and Nisarg Shah.
\newblock Fair public decision making.
\newblock In {\em Proceedings of the ACM Conference on Economics and Computation (EC)}, pages 629--646, 2017.

\bibitem[\protect\citeauthoryear{Foley}{1967}]{Foley67}
Duncan~Karl Foley.
\newblock Resource allocation and the public sector.
\newblock {\em Yale Economics Essays}, 7(1):45--98, 1967.

\bibitem[\protect\citeauthoryear{Freeman \bgroup \em et al.\egroup }{2019}]{FreemanSVX19}
Rupert Freeman, Sujoy Sikdar, Rohit Vaish, and Lirong Xia.
\newblock Equitable allocations of indivisible goods.
\newblock In {\em Proceedings of the International Joint Conference on Artificial Intelligence (IJCAI)}, pages 280--286, 2019.

\bibitem[\protect\citeauthoryear{Garg and Murhekar}{2023}]{garg2021fair}
Jugal Garg and Aniket Murhekar.
\newblock Computing {P}areto-optimal and almost envy-free allocations of indivisible goods.
\newblock {\em Journal of Artificial Intelligence Research}, 2023.
\newblock Forthcoming.

\bibitem[\protect\citeauthoryear{Gourv{\`e}s \bgroup \em et al.\egroup }{2014}]{gourves2014near}
Laurent Gourv{\`e}s, J{\'e}r{\^o}me Monnot, and Lydia Tlilane.
\newblock Near fairness in matroids.
\newblock In {\em Proceedings of the European Conference on Artificial Intelligence (ECAI)}, pages 393--398, 2014.

\bibitem[\protect\citeauthoryear{Kawase \bgroup \em et al.\egroup }{2023}]{KawaseNiSu23}
Yasushi Kawase, Koichi Nishimura, and Hanna Sumita.
\newblock Fair allocation with binary valuations for mixed divisible and indivisible goods.
\newblock {\em CoRR}, abs/2306.05986, 2023.

\bibitem[\protect\citeauthoryear{Li \bgroup \em et al.\egroup }{2023}]{LiLiLu23}
Zihao Li, Shengxin Liu, Xinhang Lu, and Biaoshuai Tao.
\newblock Truthful fair mechanisms for allocating mixed divisible and indivisible goods.
\newblock In {\em Proceedings of the International Joint Conference on Artificial Intelligence (IJCAI)}, pages 2808--2816, 2023.

\bibitem[\protect\citeauthoryear{Li \bgroup \em et al.\egroup }{2024}]{LiLiLu24}
Zihao Li, Shengxin Liu, Xinhang Lu, Biaoshuai Tao, and Yichen Tao.
\newblock A complete landscape for the price of envy-freeness.
\newblock In {\em Proceedings of the International Conference on Autonomous Agents and Multi-Agent Systems (AAMAS)}, 2024.
\newblock Forthcoming.

\bibitem[\protect\citeauthoryear{Lipton \bgroup \em et al.\egroup }{2004}]{LiptonMaMo04}
Richard~J. Lipton, Evangelos Markakis, Elchanan Mossel, and Amin Saberi.
\newblock On approximately fair allocations of indivisible goods.
\newblock In {\em Proceedings of the ACM Conference on Electronic Commerce (EC)}, pages 125--131, 2004.

\bibitem[\protect\citeauthoryear{Liu \bgroup \em et al.\egroup }{2024}]{LiuLuSu23}
Shengxin Liu, Xinhang Lu, Mashbat Suzuki, and Toby Walsh.
\newblock Mixed fair division: {A} survey.
\newblock In {\em Proceedings of the AAAI Conference on Artificial Intelligence (AAAI)}, pages 22641--22649, 2024.
\newblock Senior Member Presentation Track.

\bibitem[\protect\citeauthoryear{Moulin}{2019}]{Moulin19}
Herv\'{e} Moulin.
\newblock Fair division in the internet age.
\newblock {\em Annual Review of Economics}, 11(1):407--441, 2019.

\bibitem[\protect\citeauthoryear{Nishimura and Sumita}{2023}]{NishimuraSu23}
Koichi Nishimura and Hanna Sumita.
\newblock Envy-freeness and maximum {N}ash welfare for mixed divisible and indivisible goods.
\newblock {\em CoRR}, abs/2302.13342v2, 2023.

\bibitem[\protect\citeauthoryear{Robertson and Webb}{1998}]{RobertsonWe98}
Jack Robertson and William Webb.
\newblock {\em Cake-Cutting Algorithm: {B}e Fair If You Can}.
\newblock A K Peters/CRC Press, 1998.

\bibitem[\protect\citeauthoryear{Segal-Halevi and Sziklai}{2019}]{SS19}
Erel Segal-Halevi and Bal{\'a}zs~R. Sziklai.
\newblock Monotonicity and competitive equilibrium in cake-cutting.
\newblock {\em Economic Theory}, 68(2):363--401, 2019.

\bibitem[\protect\citeauthoryear{Steinhaus}{1949}]{Steinhaus49}
Hugo Steinhaus.
\newblock Sur la division pragmatique.
\newblock {\em Econometrica}, 17:315--319, 1949.

\bibitem[\protect\citeauthoryear{Suksompong}{2021}]{Suksompong21}
Warut Suksompong.
\newblock Constraints in fair division.
\newblock {\em ACM SIGecom Exchanges}, 19(2):46--61, 2021.

\bibitem[\protect\citeauthoryear{Varian}{1974}]{varian1973equity}
Hal~R Varian.
\newblock Equity, envy, and efficiency.
\newblock {\em Journal of Economic Theory}, 9(1):63--91, 1974.

\bibitem[\protect\citeauthoryear{Wu \bgroup \em et al.\egroup }{2021}]{conf/ijcai/00010G21}
Xiaowei Wu, Bo~Li, and Jiarui Gan.
\newblock Budget-feasible maximum {N}ash social welfare is almost envy-free.
\newblock In {\em Proceedings of the International Joint Conference on Artificial Intelligence (IJCAI)}, pages 465--471, 2021.

\end{thebibliography}

\end{document}